\begin{document}

\newtheorem{definition}{Definition}
\newtheorem{lemma}{Lemma}
\newtheorem{theorem}{Theorem}
\newtheorem{example}{Example}
\newtheorem{proposition}{Proposition}
\newtheorem{remark}{Remark}
\newtheorem{assumption}{Assumption}
\newtheorem{corrolary}{Corrolary}
\newtheorem{property}{Property}
\newtheorem{ex}{EX}
\newtheorem{problem}{Problem}
\newcommand{\argmin}{\arg\!\min}
\newcommand{\argmax}{\arg\!\max}
\newcommand{\st}{\text{s.t.}}
\newcommand \dd[1]  { \,\textrm d{#1}  }
%
\title{\LARGE \bf A Differentially Private Incentive Design for Traffic Offload to Public Transportation}

\author{Luyao Niu,~\IEEEmembership{Student Member,~IEEE,} and Andrew Clark,~\IEEEmembership{Member,~IEEE}%
\thanks{L. Niu and A. Clark are with the Department of Electrical and Computer Engineering, Worcester Polytechnic Institute, Worcester, MA 01609 USA.
{\tt\small \{lniu,aclark\}@wpi.edu}}
}

\maketitle
\thispagestyle{empty}
\pagestyle{empty}

\begin{abstract}
Increasingly large trip demands have strained urban transportation capacity, which consequently leads to traffic congestion and rapid growth of greenhouse gas emissions. In this work, we focus on achieving sustainable transportation by incentivizing passengers to switch from private cars to public transport. We address the following challenges. First, the passengers incur inconvenience costs when changing their transit behaviors due to delay and discomfort, and thus need to be reimbursed. Second, the inconvenience cost, however, is unknown to the government when choosing the incentives. Furthermore, changing transit behaviors raises privacy concerns from passengers. An adversary could infer personal information, (e.g., daily routine, region of interest, and wealth), by observing the decisions made by the government, which are known to the public. We adopt the concept of differential privacy and propose privacy-preserving incentive designs under two settings, denoted as two-way communication and one-way communication. Under two-way communication, passengers submit bids and then the government determines the incentives, whereas in one-way communication the government simply sets a price without acquiring information from the passengers. We formulate the problem under two-way communication as a mixed integer linear program, and propose a polynomial-time approximation algorithm. We show the proposed approach achieves truthfulness, individual rationality, social optimality, and differential privacy. Under one-way communication, we focus on how the government should design the incentives without revealing passengers' inconvenience costs while still preserving differential privacy. We formulate the problem as a convex program, and propose a differentially private and near-optimal solution algorithm. A numerical case study using Caltrans Performance Measurement System (PeMS) data source is presented as evaluation. The results show that the proposed approaches achieve a win-win situation in which both the government and passengers obtain non-negative utilities.

\end{abstract}

\section{Introduction}

Rapid urbanization is a global trend \cite{urbanization}. Compared to public and non-motorized transport modes, private vehicles are an increasingly popular transport choice to meet the huge traffic demands associated with the fast growing urban population \cite{dimitriou2011urban}. It has been shown that around $47\%$ of daily trips in cities are made by private motorized vehicles \cite{pourbaix2011towards}. If such trends continue, it is predicted that there will be $6.2$ billion daily trips made by private vehicles in 2025 \cite{pourbaix2011towards}.

Several challenges are raised due to the fast growing trip demands and increasingly pervasive uses of private vehicles. First, road transport is overly consumed, resulting in traffic congestion which leads to economic losses. For example, the cost caused by congestion in urban areas in USA in 2010 is approximately $\$101$ billion \cite{2010urban}. The European Union (EU) estimates the cost incurred due to congestion to be $1\%$ of its annual gross domestic product (GDP) \cite{EU2011}. Second, environmental concerns are raised due to the growth of private car use. The Greenhouse Gas (GHG) emissions due to transportation sector will reach $40\%$ by 2050 \cite{IEA}. 

Reducing the dependence on private cars has been identified as one of the objectives of governments to achieve sustainability \cite{Benefits}. One approach is to promote public transportation, which is shown to be more sustainable compared to private cares \cite{holmgren2007meta,cervero2013transport}, as an alternative \cite{handy2005driving}.

In this work, we investigate the problem of how the government could incentivize the passengers to use public transport instead of private cars. There are several challenges faced by the government to encourage passengers changing transit behaviors--from private cars to public transport. First, although the government discourages the use of private cars, passengers' trip demands still need to be satisfied. Moreover, the passengers incur inconvenience costs when switching from private cars to public transport. The inconvenience cost is due to several factors including reduced quality of service (QoS) and delay of arrival time. The passengers need to be reimbursed for these costs. Furthermore, the inconvenience cost, which varies from passenger to passenger, is unknown to the government. The passengers might be unwilling to reveal the inconvenience costs, or lie on their inconvenience costs to earn benefits during the interaction with the government. Although existing literature has experimentally identified the factors that prevent passengers from changing transit behavior \cite{bhat2006impact,jain2014identifying,redman2013quality,wall2017encouraging}, a theoretical analysis on how to incentivize the passengers to change their transit behaviors with privacy guarantee has received little research attention.

In this paper, we model and analyze how the government could incentivize the passengers to satisfy their traffic demands via public transport instead of private cars under two settings, named two-way communication and one-way communication. Under two-way communication, the government and passengers can communicate with each other. Under one-way communication, the government can send information to the passengers but not vice versa. We not only address the challenges faced by the government, but also address the passengers' privacy concerns when shifting from private cars to public transport. The privacy concerns are raised since an untruthful party can observe how the passengers respond to incentives and learn the passengers' private information including region of interest and daily routine. Such privacy concerns discourage privacy sensitive passengers to switch from private cars to public transport. We make the following contributions.

\begin{itemize}
    \item We model the interaction between the government and passenger under two-way communication using a reverse auction model. We formulate the problem of incentivizing the passengers as a mixed integer linear program. We propose an efficient algorithm to reduce the computation complexity for computing the passengers selected by the government and the associated incentives. 
    \item We prove that the proposed mechanism design under two-way communication achieves approximate optimal social welfare, truthfulness, individual rationality, and differential privacy.
    \item For the one-way communication setting, we formulate the problem as an online convex program. We give a polynomial-time algorithm to solve for the mechanism design. We prove that the proposed mechanism is differentially private and provides the same asymptotic utility as the best fixed price, i.e., achieves Hanan consistency.
    \item We present a numerical case study with real-world trace data as evaluation. The results show that the proposed approach achieves individual rationality and non-negative social welfare, and is privacy preserving.
\end{itemize}

The remainder of this paper is organized as follows. We discuss the related works in Section \ref{sec:related}. In Section \ref{sec:formulation}, we present the problem formulation under two-way and one-way communication settings, respectively. We present the proposed incentive mechanism design in Section \ref{sec:auction} for the two-way communication setting. Section \ref{sec:one way} gives the proposed solution for one-way communication setting. The proposed approaches are demonstrated using a numerical case study in Section \ref{sec:simulation}. We conclude the paper in Section \ref{sec:conclusion}.

\section{Related Work}\label{sec:related}
In this section, we present literature review on intelligent transportation systems and differential privacy. Significant research effort has been devoted to achieving intelligent and sustainable transportation systems. Planning and routing navigation problems have been investigated by transportation and control communities \cite{lam2016learning,vasile2017minimum,como2013robust1,como2013robust2,pavone2012robotic,zhang2016model}. Various approaches have been proposed to improve operation efficiency of existing transportation infrastructure, among which vehicle balancing \cite{miao2017data} has been extensively studied for bike sharing \cite{schuijbroek2017inventory} and taxis \cite{miao2016taxi}. Metering strategies have also been investigated \cite{coogan2014freeway}. Different from the works mentioned above, this paper focuses on the demand side management, with particular interest on how to encourage passengers to change their transit behaviors via incentive design. 

In the following, we discuss related works on demand side management. Alternative travel infrastructures such as bike sharing system \cite{demaio2009bike}, have been implemented all over the world. Moreover, ridesharing system and the associated ridesharing match system have been investigated \cite{kleiner2011mechanism,ma2013t}, which grouped passengers with similar itineraries and time schedules together to reduce the number of operating vehicles. Most of these works focus on taxis and ride-hailing services such as Uber and Lyft, and ignore the potential from public transport. Pricing schemes have been proposed to reduce the number of operating vehicles at peak hour \cite{goh2002congestion,kachroo2017optimal}. These works focus on private cars and ignore public transport. Researchers have identified the factors (e.g., passengers' attitude and government's policy) that prevent passengers from taking public transport \cite{bhat2006impact,jain2014identifying,redman2013quality,wall2017encouraging,nurdden2007effect,beirao2007understanding}. However, to the best of our knowledge, there has been little research attention on how to design incentives to encourage passengers to switch from private to public transit services.

Mechanism design has recently been used in engineering applications such as cloud computing. In particular, Vickrey-Clarke-Groves (VCG) mechanism \cite{vickrey1961counterspeculation} is widely used to preserve truthfulness. However, truthful communication raises privacy concerns. To address the privacy issue, we adopt the concept of differential privacy \cite{dwork2008differential,dwork2006calibrating,dwork2014algorithmic}. Mechanism designs with differential privacy, such as exponential mechanism, have been proposed \cite{mcsherry2007mechanism,huang2012exponential,dwork2014algorithmic,han2018privacy}. However, they are not readily applicable to the problem investigated in this paper because the presence of inconvenience cost functions leads to violations of individual rationality. Moreover, the exponential mechanism is computationally complex, and hence in this paper we propose efficient approximation algorithms.

Trial-and-error implementation for toll pricing has been proposed in \cite{zhao2006line}. Different from \cite{zhao2006line}, we consider a closed-loop Stackelberg information pattern, and compute the optimal incentive price. To solve the problem under one-way communication setting, we adopt the Laplace mechanism to preserve differential privacy \cite{dwork2014algorithmic}. This paper extends our preliminary conference version \cite{Niu2019Differential}, in which two-way communication setting is studied. We extend the preliminary work by also investigating the one-way communication setting.
\section{Problem Formulation}\label{sec:formulation}

In this section, we first give the problem overview. Then we present the problem formulations under two settings, denoted as two-way communication and one-way communication. We finally discuss the privacy model.

\subsection{Problem Overview}

Let $\mathcal{S}=\{1,\ldots,S\}$ denote the set of origin-destination (OD) pairs that will require passengers to switch to public transport over time horizon $t=1,\ldots,T$. When passengers switch from private cars to public transport, they can provide some amount of traffic offload\footnote{In the remainder of this article, we use traffic offload and reduce the use of private car interchangeably.}. We assume each OD pair $s\in\mathcal{S}$ requires $Q_{s,t}$ amount of traffic offload at time $t$ to achieve sustainability. Let $\mathcal{N}=\{1,\ldots,N\}$ be the set of passengers. At each time $t$, any passenger $i\in\mathcal{N}$ that switches to public transport for any OD pair $s\in\mathcal{S}$ receives revenue $r_{i,s,t}(q_{i,s,t})$ issued by the government, where $q_{i,s,t}\geq 0$ is the amount of traffic offload that passenger $i$ can provide for OD pair $s$ at time $t$ and $r_{i,s,t}(0)=0$. Passenger $i$ also incurs inconvenience cost $C_{i,s}(q_{i,s,t})$ if it switches from private to public transit service due to discomfort and time of arrival delays. We remark that each passenger is physically located close to some OD pair $s$ at each time $t$. Hence each passenger is only willing to switch to public transport for one OD pair $s$ that is physically close to its current location. For other OD pairs $s'\neq s$, we can regard the associated inconvenience cost as infinite. We assume that the inconvenience cost function $C_{i,s}(q_{i,s,t})$ is continuously differentiable, strictly increasing with respect to $q_{i,s,t}$ for all $s\in\mathcal{S}$, and convex with $C_{i,s}(0)=0$ for all $i$ and $s$. The utility of the passenger at each time step $t$ is given by
\begin{equation}\label{eq:passenger utility}
    U_{i,t}=\sum_{s}\left[r_{i,s,t}(q_{i,s,t})-C_{i,s}(q_{i,s,t})\right].
\end{equation}
In this work, we assume the passengers are \emph{selfish} and \emph{rational}, i.e., the passengers selfishly maximize their utilities.


\subsection{Case 1: Interaction with Two-way Communication}\label{subsec:two way}
In this subsection, we present the problem formulation under two-way communication. In this case, the interaction between the government and the set of passengers is captured by a reverse auction model. 

The passengers act as the bidders. Each passenger can submit a bid $\mathbf{b}_{i,t}=[b_{i,1,t},\cdots,b_{i,S,t}]$ to the government at each time $t$, where element $b_{i,s,t}=(\zeta q_{i,s,t},\bar{C}_{i,s}(q_{i,s,t}))$ contains the amount of traffic offload that passenger $i$ can provide and the associated inconvenience cost. Here $\zeta$ converts the amount of traffic offload from utilities in dollars. Without loss of generality, we assume $\zeta=1$ in the remainder of this article. Note that $\bar{C}_{i,s}(q_{i,s,t})$ is the inconvenience cost claimed by passenger $i$, which does not necessarily equal the true cost $C_{i,s}(q_{i,s,t})$.

The government is the auctioneer. It collects the bids from all passengers, and then selects a set of passengers that should participate in traffic offload. In particular, the government computes a selection profile $X\in\{0,1\}^{N\times S\times T}$, with each element $x_{i,s,t}=1$ if passenger $i$ is selected and $0$ otherwise.
If a passenger $i$ is selected by the government for OD pair $s$, an associated incentive $r_{i,s,t}(q_{i,s,t})$ is issued to passenger $i$.

The utility \eqref{eq:passenger utility} of each passenger $i$ at time $t$ is rewritten as
\begin{equation}\label{eq:passenger utility 1}
     U_{i,t}=\sum_{s}x_{i,s,t}\left[r_{i,s,t}(q_{i,s,t})-C_{i,s}(q_{i,s,t})\right],~\forall i,t.
\end{equation}
The social welfare can be represented as 
\begin{equation}
    \Omega(X,B)=\sum_t\sum_s\sum_i\left[x_{i,s,t}(q_{i,s,t}-C_{i,s}(q_{i,s,t}))\right],
\end{equation}
where $B$ contains $\mathbf{b}_{i,t}$ for all $i$ and $t$. The government aims at maximizing social welfare $\Omega(X,B)$. This social welfare maximization problem is given as
\begin{subequations}\label{eq:social welfare maximization 1}
\begin{align}
    \max_{X}  \quad&\sum_t\sum_s\sum_i\left[x_{i,s,t}(q_{i,s,t}-C_{i,s}(q_{i,s,t}))\right]\\
    \text{s.t.} \quad& \sum_{s\in\mathcal{S}}x_{i,s,t}\leq 1,~\forall i,t\label{eq:social welfare maximization constr 1}\\
    &\sum_{i\in\mathcal{N}}x_{i,s,t}q_{i,s,t}\geq Q_{s,t},~\forall s,t\label{eq:social welfare maximization constr 2}\\
    &x_{i,s,t}\in\{0,1\},~\forall i,s,t\label{eq:social welfare maximization constr 3}
\end{align}
\end{subequations}
Constraint \eqref{eq:social welfare maximization constr 1} implies that a passenger can only be selected for one OD pair at each time $t$. Constraint \eqref{eq:social welfare maximization constr 2} requires the desired traffic offload $Q_{s,t}$ must be satisfied for all $s$ and $t$. Constraint \eqref{eq:social welfare maximization constr 3} defines binary variable $x_{i,s,t}$.

Under the two-way communication setting, a malicious adversary aims at inferring the inconvenience cost function of each passenger by observing the selection profile $X$. The adversary can observe $X$ be eavesdropping on communication channel. Let $X_t$ be the selection profile at time $t$. Then the information perceived by the adversary up to time $t$ is $\mathcal{I}_t^{two}=\{X_{t'}|t'\leq t\}$. In this case, the government needs to compute a privacy preserving incentive mechanism such that the passengers truthfully report their inconvenience cost functions so that the social welfare is (approximately) optimal. 

Besides the privacy guarantees, we state some additional desired properties that the government needs to achieve under this two-way communication setting. First, individual rationality for each passenger should be achieved, i.e., each passenger must obtain non-negative utility when being selected by the government. Second, the government wishes to reveal the true inconvenience cost functions from the passengers to seek the optimal solution to \eqref{eq:social welfare maximization 1}. Therefore the government needs to ensure that the passengers bid truthfully. Truthfulness is defined as follows.
\begin{definition}
\emph{(Truthfulness).} An auction is truthful if and only if bidding the true inconvenience cost function, i.e., $\bar{C}_{i,s}(q_{i,s,t})=C_{i,s}(q_{i,s,t})$ for all $q_{i,s,t}$, is the dominant strategy for any passenger $i$ regardless of the bids from the other passengers. In other words, bidding $\bar{C}_{i,s}(q_{i,s,t})=C_{i,s}(q_{i,s,t})$ maximizes the utility \eqref{eq:passenger utility 1} of passenger $i$ for all $i$.
\end{definition}

\subsection{Case 2: Interaction with One-way Communication}\label{subsec:one way}

In this subsection, we present a problem formulation when two-way communication is infeasible, while one-way communication from the government to the passengers is enabled. Under this setting, the passengers cannot report any information to the government. The government hence broadcasts an incentive price $p_{s,t}$ for each OD pair $s$ at each time step $t$, and then observes the responses from the passengers to design the incentive price for next time step $(t+1)$. Different from two-way communication, the passengers respond to the incentive price rather than bidding a fixed amount of traffic offload. Hence, the amount of traffic offload provided by each passenger $i$ for OD pair $s$ at time $t$ is defined as a function of incentive price $p_{s,t}$, denoted as $q_{i,s}(p_{s,t})$. We assume that the the traffic offload $q_{i,s}(p_{s,t})$ provided by each passenger $i$ is strictly increasing with respect to $p_{s,t}$. 

The government predicts the traffic condition for the set of OD pairs $\mathcal{S}=\{1,2,\cdots,S\}$ in the near future time horizon $t=1,\cdots,T$ based on the historical traffic information (e.g., traffic conditions during rush hours). Suppose the government requires $Q_{s,t}\geq 0$ amount of traffic offload on OD pair $s$ at each time index $t$. To satisfy $Q_{s,t}$ amount of traffic offload, the government designs a unit incentive price $p_{s,t}$ for each time index $t$ to incentivize individual passengers to participate in the traffic offload program. The information perceived by the government $\mathcal{I}_{t}^{gov}$ up to time $t$ includes the following: (i) the historical incentives $\{p_{s,t'}|t'=1,\cdots,t-1,s\in\mathcal{S}\}$, (ii) the historical traffic offload offered by the passengers $\{q_{i,s}(p_{s,t'})|i\in\mathcal{N},s\in\mathcal{S},t'=1,\cdots,t-1\}$. Thus the government's decision on $p_{s,t}$ for each time $t$ and OD pair $s$ can be interpreted as a policy mapping from the information set to the set of non-negative real numbers $p_{s,t}:\mathcal{I}_t^{gov}\mapsto\mathbb{R}_{\geq 0}$.

At each time step $t$, the passengers observe the incentives $p_{s,t}$, and then decide whether to participate in traffic offload and earn the incentive $p_{s,t}q_{i,s}(p_{s,t})$ based on their own utility functions. Passengers that participate in traffic offload incur inconvenience cost $C_{i,s}(q_{i,s}(p_{s,t}))$. The inconvenience cost function $C_{i,s}(q_{i,s}(p_{s,t}))$ is private to each passenger $i$. The information $\mathcal{I}_t^i$ available to passenger $i$ up to time $t$ includes the following: (i) the historical incentives $\{p_{s,t'}|t'=1,\cdots,t,s\in\mathcal{S}\}$, (ii) the traffic offload function $\{q_{i,s}(\cdot)|s\in\mathcal{S}\}$, and (iii) its inconvenience cost function $\{C_{i,s}(\cdot)|s\in\mathcal{S}\}$. 

Let $\mathbf{p}_t=[p_{1,t},\cdots,p_{S,t}]$ be the incentive prices for all OD pairs $s\in\mathcal{S}$ at time $t$. The utility of each passenger $i$ at time step $t$ can be represented as
\begin{equation}
    U_{i,t}(\mathbf{p}_t)=\sum_{s\in\mathcal{S}}\left\{p_{s,t}q_{i,s}(p_{s,t})-C_{i,s}(q_{i,s}(p_{s,t}))\right\},~\forall i,t.
\end{equation}
The social cost is given by
\begin{multline}\label{eq:gov utility}
    \Lambda(\mathbf{p})=\sum_t\sum_{s\in\mathcal{S}}\Bigg\{\sum_{i\in\mathcal{N}}C_{i,s}(q_{i,s}(p_{s,t}))\\+\beta_s\left[Q_{s,t}-\sum_{i\in\mathcal{N}}q_{i,s}(p_{s,t}) \right]^+\Bigg\},
\end{multline}
where $\mathbf{p}=[\mathbf{p}_1,\cdots,\mathbf{p}_T]^T$ contains the incentive prices for all $s$ and $t$, $[\cdot]^+$ represents $\max\{\cdot,0\}$, and $\beta_s$ represents the penalty due to deficit of traffic offload. The social cost minimization problem is formulated as $\min_{\mathbf{p}} \Lambda(\mathbf{p}).$

Under the one-way communication setting, the malicious party could not observe the participation of each passenger directly as in two-way communication setting. We focus on a malicious party that can observe the incentive prices issued by the government up to time $t$ and then infer the amount of traffic offload offered by each passenger $i$, which might be further used to infer the inconvenience cost functions of the passengers. Denote the information obtained by the government up to time $t$ as $\mathcal{I}_{t}^{one}$. Then we have $\mathcal{I}_{t}^{one}=\{p_{s,t'}|\forall s,\forall t'\leq t\}$. The objective of a malicious party is to compute $q_{i,s}(p_{s,t'})$ given $\mathcal{I}_{t}^{mal}$. In this case, the government's objective is to compute a privacy preserving incentive design such that the social welfare is (approximately) maximized.

Besides the privacy guarantee, we briefly discuss the game-theoretic properties under one-way communication setting. Since the government broadcasts the incentive price while the passengers decide if they will participate or not, individual rationality is automatically guaranteed for rational passengers. Truthfulness is not required under one-way communication setting since the passengers cannot send messages to the government under this setting.

\subsection{Notion of Privacy}

In this subsection, we give the notion of privacy adopted in this paper. We focus on differential privacy \cite{dwork2008differential,dwork2006calibrating}, which is defined as follows.
\begin{definition}\label{def:differential privacy}
\emph{($\epsilon$-Differential Privacy.)} Given $\epsilon\geq 0$, a computation procedure $M$ is said to be $\epsilon$-differentially private if for any two inputs $C_1$ and $C_2$ that differ in a single element and for any set of outcomes $L\subseteq\text{Range}(M)$, the relationship $Pr(M(C_1)\in L)\leq \exp(\epsilon)\cdot Pr(M(C_2)\in L)$ holds, where $\text{Range}(M)$ is the set of all outcomes of $M$.
\end{definition}
Definition \ref{def:differential privacy} requires computation procedure $M$ to behave similarly given similar inputs, where parameter $\epsilon$ models how similarly the procedure should behave. A more relaxed and general definition of differential privacy is as follows.
\begin{definition}\label{def:relaxed differential privacy}
\emph{($(\epsilon,\delta)$-Differential Privacy.)} Given $\epsilon\geq 0$ and $\delta\geq 0$, a computation procedure $M$ is said to be $(\epsilon,\delta)$-differentially private if for any two inputs $C_1$ and $C_2$ that differ in a single element and for any set of outcomes $L\subseteq\text{Range}(M)$, inequality $Pr(M(C_1)\in L)\leq \exp(\epsilon)\cdot Pr(M(C_2)\in L)+\delta$ holds.
\end{definition}

To quantify the privacy leakage using the proposed incentive designs, we adopt the concept of min-entropy leakage \cite{barthe2011information}. We first introduce the concepts of min-entropy and conditional min-entropy \cite{renyi1961measures}, and then define the min-entropy leakage. Let $V$ and $Y$ be random variables. The min-entropy of $V$ is defined as $H_{\infty}(V)=\lim_{\alpha\rightarrow\infty}\frac{1}{1-\alpha}\log_2\sum_{v}Pr(V=v)^\alpha$, where $Pr(V=v)$ represents the probability of $V=v$. The conditional min-entropy is defined as
$H_{\infty}(V|Y)=-\log_2\sum_{y}Pr(Y=y)\max_vPr(v|y)$, where $Pr(v|y)$ is the probability that $V=v$ given that $Y=y$. Then the min-entropy leakage \cite{barthe2011information} is defined as $L=H_{\infty}(V)-H_{\infty}(V|Y).$

Under two-way communication setting, the min-entropy leakage is computed as
\begin{multline*}
    L=\lim_{\alpha\rightarrow\infty}\frac{1}{1-\alpha}\log_2\sum_{B}Pr(B)^\alpha-\\
    \left(-\log_2\sum_{X}Pr(X)\max_BPr(B|X)\right),
\end{multline*} 
where $Pr(B)$ is the probability that a bidding profile $B$ is submitted, and $Pr(B|X)$ is the probability that the bidding profile $B$ is submitted given the selection profile $X$ is observed. Under one-way communication setting, the min-entropy leakage is computed as
\begin{multline*}
    L=\lim_{\alpha\rightarrow\infty}\frac{1}{1-\alpha}\log_2\sum_{C}Pr(C)^\alpha-\\
        \left(-\log_2\sum_{\mathbf{p}}Pr(\mathbf{p})\max_CPr(C|\mathbf{p})\right),
\end{multline*} 
where $Pr(C)$ is the probability that the collection of passengers' inconvenience cost functions is $C$, and $Pr(C|\mathbf{p})$ is the probability that the collection of inconvenience costs is $C$ given the historical incentives $\mathbf{p}$ is observed.

\section{Solution for Two-way Communication Setting}\label{sec:auction}
Motivated by exponential mechanism \cite{huang2012exponential,mcsherry2007mechanism}, we present an incentive design for the two-way communication setting in this section. We propose a payment scheme that achieves individual rationality. We mitigate the computation complexity incurred in exponential mechanism using an iterative algorithm. We prove that the desired properties are achieved using the proposed incentive design.

\subsection{Solution Approach}

In this subsection, we give an exact solution under two-way communication. We formally prove that truthfulness, approximate social welfare maximizing, and differential privacy are achieved using the proposed mechanism.

The mechanism is presented in Algorithm \ref{algo:mechanism}. The algorithm takes the bid profile from the passengers as input, and gives the selection profile $X$ and the incentives issued to each selected passenger. The algorithm works as follows. At each time $t\leq T$, the government selects a feasible solution to social welfare maximization problem \eqref{eq:social welfare maximization 1}. The probability of selecting each feasible $X$ is proportional to the exponential function evaluated at the associated social welfare $\Omega(X,B)$ with scale $\frac{\epsilon}{2\Delta}$, where $\Delta$ is the difference between the upper and lower bound of social welfare $\Omega(X,B)$. Although the computation of selection profile $X$ is motivated by exponential mechanism \cite{huang2012exponential,mcsherry2007mechanism}, the Vickrey-Clarke-Groves (VCG)-like payment scheme adopted by exponential mechanism is not applicable to the problem investigated in this work. The reason is that the VCG-like payment scheme violated individual rationality and truthfulness in our case, due to the fact that the passengers do not only have valuations over the incentives, but also inconvenience costs during traffic offload. To this end, the payment scheme \eqref{eq:incentive 1} is proposed for the problem of interest, in which the incentive issued to each passenger is determined by the social cost introduced by each passenger. In the following, we characterize the mechanism presented in Algorithm \ref{algo:mechanism}. 
\begin{theorem}\label{thm:properties of sol 1}
The mechanism described in Algorithm \ref{algo:mechanism} achieves truthfulness, individual rationality, near optimal social welfare, and $\epsilon$-differential privacy.
\end{theorem}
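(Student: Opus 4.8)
\emph{Overall strategy.} The four claims decouple cleanly, so the plan is to dispatch them in the order: differential privacy, near‑optimal social welfare, individual rationality, truthfulness. The first two only involve the selection rule and are essentially black‑box consequences of the exponential‑mechanism machinery cited before the theorem; individual rationality is a short monotonicity argument on the payment rule \eqref{eq:incentive 1}; and truthfulness is the real work and is where the non‑VCG form of \eqref{eq:incentive 1} earns its keep.

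\emph{Privacy and near‑optimality.} Fix a reported profile $B$ and view the computation of $X$ as an instance of the exponential mechanism whose score function is the social welfare $\Omega(\cdot,B)$ restricted to the feasible polytope cut out by \eqref{eq:social welfare maximization constr 1}--\eqref{eq:social welfare maximization constr 3}, sampled with scale $\tfrac{\epsilon}{2\Delta}$. The first step is the sensitivity bound: for any two cost profiles differing in a single passenger's cost function, and for every feasible $X$, one has $|\Omega(X,B_1)-\Omega(X,B_2)|\le\Delta$, which is immediate because $\Delta$ is by definition the range of $\Omega$ (and that single passenger's reported cost enters $\Omega$ only through its own additive block). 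Bounding both the exponential ratio and the partition‑function ratio by $\exp(\epsilon/2)$ then gives $\Pr(M(C_1)\in L)\le\exp(\epsilon)\Pr(M(C_2)\in L)$, i.e. $\epsilon$‑differential privacy in the sense of Definition \ref{def:differential privacy}; since the adversary only observes $\mathcal{I}_t^{two}=\{X_{t'}:t'\le t\}$, post‑processing preserves this, and the min‑entropy‑leakage bound follows from the standard conversion between $\epsilon$‑DP and information leakage. (If the mechanism is instead run once per time step with a per‑step budget, the $T$‑fold composition theorem supplies the overall guarantee; this is routine bookkeeping.) For near‑optimality I invoke the utility guarantee of the exponential mechanism: with probability at least $1-\eta$ the selected $X$ obeys $\Omega(X,B)\ge\Omega^\star-\tfrac{2\Delta}{\epsilon}\bigl(\ln|\mathcal{F}|+\ln\tfrac1\eta\bigr)$, where $\mathcal{F}$ is the feasible set of selection profiles and $\Omega^\star$ the optimum of \eqref{eq:social welfare maximization 1}; since $|\mathcal{F}|\le 2^{NST}$ the additive loss is $O\!\bigl(\tfrac{\Delta NST}{\epsilon}\bigr)$, so $X$ is near socially optimal. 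Combined with truthfulness below, $B$ carries the true costs, hence $\Omega(X,B)$ is the true social welfare.

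\emph{Individual rationality.} Under truthful bidding $\bar C_{i,s}=C_{i,s}$, I must show $r_{i,s,t}(q_{i,s,t})\ge C_{i,s}(q_{i,s,t})$ whenever $x_{i,s,t}=1$, so that $U_{i,t}\ge0$ in \eqref{eq:passenger utility 1}, and $r_{i,s,t}\ge0$. This is a direct reading of \eqref{eq:incentive 1}: the incentive is assembled as passenger $i$'s reported inconvenience cost plus the (soft) marginal contribution of $i$ to the welfare objective — the difference of the log‑partition values of $\Omega$ with and without $i$ available. That marginal term is non‑negative because deleting a passenger only shrinks the feasible set and lowers every score, so the softmax cannot increase; hence the incentive at least reimburses the true cost and is itself non‑negative.

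\emph{Truthfulness (the crux).} I would use the ``maximal‑in‑distributional‑range plus VCG‑style payment'' argument, adapted to the procurement/cost setting. Fix $i$ and the others' bids $B_{-i}$, write $Z(\bar C_i)=\sum_{X\in\mathcal{F}}\exp\!\bigl(\tfrac{\epsilon}{2\Delta}\Omega(X,(\bar C_i,B_{-i}))\bigr)$, and recall that the exponential‑mechanism distribution $\mu_{\bar C_i}$ is the unique maximizer of the strictly concave functional $F(\nu;\bar C_i)=\mathbb{E}_{X\sim\nu}[\Omega(X,(\bar C_i,B_{-i}))]+\tfrac{2\Delta}{\epsilon}H(\nu)$, with $\max_\nu F(\nu;\cdot)=\tfrac{2\Delta}{\epsilon}\ln Z(\cdot)$. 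The key algebraic step is to show, using \eqref{eq:incentive 1}, that passenger $i$'s expected utility when it reports $\bar C_i$ but its true cost is $C_i$ equals $F(\mu_{\bar C_i};C_i)-g(B_{-i})$ with $g$ independent of $\bar C_i$ — precisely the identity the payment scheme was designed to produce, and the reason the plain exponential mechanism's VCG‑like rule had to be replaced (that rule could push $r_{i,s,t}$ below $C_{i,s}$ and break the just‑proved individual rationality). Granting the identity, truthfulness is immediate: $F(\mu_{\bar C_i};C_i)\le\max_\nu F(\nu;C_i)=F(\mu_{C_i};C_i)$ with equality iff $\bar C_i=C_i$ by strict concavity, so reporting the true cost function is a dominant strategy regardless of $B_{-i}$. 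I expect this identity — verifying that \eqref{eq:incentive 1} reproduces exactly the softmax‑welfare‑minus‑constant payoff while simultaneously staying above the inconvenience cost — to be the main obstacle; the remaining pieces are either citations (DP, utility bound) or one‑line monotonicity arguments (the marginal term is non‑negative).
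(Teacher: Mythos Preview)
The paper does not actually prove Theorem~\ref{thm:properties of sol 1} in the text; it defers entirely to the conference version \cite{Niu2019Differential}. Your sketch is the standard exponential--mechanism--with--payments argument of Huang and Kannan \cite{huang2012exponential}, which is precisely the machinery the paper cites as its motivation, so your route is almost certainly the intended one: privacy and the utility bound directly from the exponential mechanism, truthfulness from the Gibbs variational identity $\max_\nu\bigl\{\mathbb{E}_\nu[\Omega(\cdot,(C_i,B_{-i}))]+\tfrac{2\Delta}{\epsilon}H(\nu)\bigr\}=\tfrac{2\Delta}{\epsilon}\ln Z(C_i,B_{-i})$ attained at $\mu_{C_i}$, and individual rationality from $\ln Z\ge\ln Z_{-i}$.

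Two small points worth tightening before you write it out. In the individual--rationality step, ``lowers every score'' is not the right justification: scores on allocations with $x_{i,\cdot,\cdot}=0$ are unchanged when $i$ is deleted; the correct statement is that $\mathcal{F}_{-i}$ embeds into $\mathcal{F}$ with identical scores, so $Z_{-i}$ is a sub--sum of $Z$ and hence $\ln Z\ge\ln Z_{-i}$. In the truthfulness step, your identity $\mathbb{E}[U_{i,t}]=F(\mu_{\bar C_i};C_i)-g(B_{-i})$ requires that the payment \eqref{eq:incentive 1} be issued unconditionally (or, equivalently, that the ex--post payment be scaled by the inverse selection probability), whereas Algorithm~\ref{algo:mechanism} literally says ``for each passenger that is selected''; make this reading explicit, since otherwise the expected payment picks up a factor of $\Pr(x_{i,\cdot,\cdot}=1)$ and the variational argument no longer closes.
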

\begin{proof}
We omit the proof due to space limit. See \cite{Niu2019Differential} for a detailed proof.
\end{proof}

The mechanism proposed in Algorithm \ref{algo:mechanism} is computationally expensive. The payment scheme \eqref{eq:incentive 1} is intractable when the passenger set is large since \eqref{eq:incentive 1} needs to compute the social welfare associated with $X$ and $X_{-i}$ for all $i$. Therefore, a computationally efficient algorithm is desired.

\begin{center}
  	\begin{algorithm}[!htp]
  		\caption{Mechanism design for the government.}
  		\label{algo:mechanism}
  		\begin{algorithmic}[1]
  			\Procedure{Mechanism}{$B$}
  			\State \textbf{Input}: Bid profile $B$
  			\State \textbf{Output:} Selection profile $X$, incentives $R$
  			\While{$t\leq T$}
  			\State Choose a selection profile $X$ that is feasible for social welfare maximization problem \eqref{eq:social welfare maximization 1} with probability
    \begin{equation}\label{eq:selection}
        Pr(X)\propto \exp\left(\frac{\epsilon}{2\Delta}\Omega(X,B)\right).
    \end{equation}
  			\State For each passenger that is selected, issue incentive $r_{i}$ as
    \begin{multline}\label{eq:incentive 1}
        r_{i,s,t}=\underset{X\sim D(\mathbf{b}_{i,t},B_{-i,t})}{\mathbb{E}}\bigg\{\sum_{j}\sum_sx_{j,s,t}q_{j,s,t}\\-\sum_{j'\neq i}\sum_sx_{j',s,t}C_{j',s}(q_{j',s,t})\bigg\}
        +\frac{2\Delta}{\epsilon}E(D(\mathbf{b}_{i,t},B_{-i,t}))\\
        -\frac{2\Delta}{\epsilon}\ln\left(\sum_{X}\exp\left(\frac{\epsilon}{2\Delta}\Omega(X_{-i},B_{-i})\right)\right),
    \end{multline}
    where $\Delta$ is the difference between the upper and lower bound of social welfare $\Omega(X,B)$, $E(\cdot)$ is the Shannon entropy, $D(\cdot)$ is the probability distribution over selection profile $B$, and $X_{-i,t}$ and $B_{-i,t}$ are the matrix obtained by removing the $i$-th row and $i$-th column in selection profile and bid profile, respectively.
    \State $t\leftarrow t+1$
    \EndWhile
  			\EndProcedure
  		\end{algorithmic}
  	\end{algorithm}
  \end{center}

\subsection{Efficient Algorithm}

Algorithm \ref{algo:mechanism} is computationally intensive and hence we need an efficient algorithm. In this subsection, we give a mechanism that achieves the desired game-theoretic properties and privacy guarantees and runs in polynomial time.

In real world implementation, since the passengers are geographically distributed, the government can decompose the social welfare maximization problem \eqref{eq:social welfare maximization 1} with respect to OD pair $s$. Then problem \eqref{eq:social welfare maximization 1} becomes a set of optimization problems associated with each OD pair $s$ as follows:
\begin{align}
    \max_{\mathbf{x}_s}  \quad&\sum_{i\in\mathcal{N}}x_{i,s,t}\left(q_{i,s}-C_{i,s}(q_{i,s})\right)\label{eq:decomposed social welfare maximization}\\
    \text{s.t.} \quad& 
    \sum_{i\in\mathcal{N}}x_{i,s,t}q_{i,s,t}\geq Q_{s,t},~\forall s,t\nonumber\\
    &x_{i,s,t}\in\{0,1\},~\forall i,s,t.\nonumber
\end{align}

Given the set of decomposed problems, if we can achieve the optimal solution to each decomposed problem using an incentive design, then we reach social optimal solution. Thus our objective is design a mechanism that achieves the (approximate) optimal solution of each decomposed problem, individual rationality, truthfulness, and differential privacy.

The proposed efficient algorithm for each decomposed problem is presented in Algorithm \ref{algo:subproblem}. The algorithm iteratively computes the set of passengers $\mathcal{W}_{s,t}$ selected by the government for OD pair $s$ at time $t$. First, the set $\mathcal{W}_{s,t}$ is initialized as an empty set. Then at each iteration $k$, the probability that selecting a passenger $i$ that has not been selected at time $t$ is proportional to the exponential function $\exp\left(\epsilon'(q_{i,s,t}-\bar{C}_{i,s}(q_{i,s,t}))\right)$, i.e.,
\begin{multline}\label{eq:selection probability}
    Pr\left(\mathcal{W}_{s,t}\leftarrow\mathcal{W}_{s,t}\cup\{i\}\right)\\\propto
    \begin{cases}
    \exp\left(\epsilon'(q_{i,s,t}-\bar{C}_{i,s}(q_{i,s,t}))\right),
    &\mbox{if $i$ has not been selected;}\\
    0&\mbox{otherwise;}
    \end{cases}
\end{multline}
where $\epsilon'=\frac{\epsilon}{e\ln(e/\delta)}$. Then the set of selected passengers $\mathcal{W}_{s,t}$ are removed from the passenger set $\mathcal{N}$. For each $i\in\mathcal{W}_{s,t}$, the government issues incentive $r_{i,s,t}$ computed as
\begin{multline}\label{eq:payment design}
    r_{i,s,t}=(q_{i,s,t}+z)\exp\left(\epsilon'(q_{i,s,t}-\bar{C}_{i,s}(q_{i,s,t}))\right)\\
    -\int_0^{q_{i,s,t}+z}\exp(\epsilon'y)\text{d}y,
\end{multline}
where $z=\frac{\bar{C}_{i,s}(q_{i,s,t})}{\exp\left(\epsilon'(q_{i,s,t}-\bar{C}_{i,s}(q_{i,s,t}))\right)}$.
We characterize the solution presented in Algorithm \ref{algo:subproblem} as follows.
\begin{lemma}
Algorithm \ref{algo:subproblem} achieves truthfulness, individual rationality, and $\left(\frac{\epsilon\Delta}{e(e-1)},\delta\right)$-differential privacy. Moreover, Algorithm \ref{algo:subproblem} achieves near optimal social welfare $\Omega_s^\ast-O(\ln Q_s)$ with probability at least $1-\frac{1}{{Q_s}^{O(1)}}$, where $\Omega_s^\ast$ is the maximum social welfare for OD pair $s$.
\end{lemma}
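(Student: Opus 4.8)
The plan is to verify the four asserted properties one at a time, reusing the machinery behind Algorithm~\ref{algo:mechanism} and Theorem~\ref{thm:properties of sol 1} but exploiting the decomposed, iterative structure of Algorithm~\ref{algo:subproblem}.

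\emph{Truthfulness and individual rationality.} Fix a passenger $i$, an OD pair $s$, a time $t$, the offload amount $q_{i,s,t}$ it can deliver, and the reports of all other passengers. Since $q_{i,s,t}$ is a quantity the passenger must physically provide, its only strategic freedom is the claimed cost $\bar{C}_{i,s}(q_{i,s,t})$, so the setting is single-parameter. First I would show that the selection rule \eqref{eq:selection probability} is monotone: $\exp(\epsilon'(q_{i,s,t}-\bar{C}_{i,s}(q_{i,s,t})))$ is strictly decreasing in $\bar{C}_{i,s}(q_{i,s,t})$, and the ``sample-one, remove, renormalize'' structure preserves this, so the probability that $i$ is ever placed in $\mathcal{W}_{s,t}$ over the whole run is non-increasing in $\bar{C}_{i,s}(q_{i,s,t})$. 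Equivalently, each iteration is an exponential mechanism, i.e.\ it samples the distribution maximizing expected welfare contribution plus $\tfrac{1}{\epsilon'}$ times Shannon entropy, so a VCG-type payment that forces $i$ to internalize the social objective makes truthful reporting optimal, exactly as in \cite{huang2012exponential,mcsherry2007mechanism} and the conference version \cite{Niu2019Differential}. The next step is to check that \eqref{eq:payment design} is precisely the payment that this monotone allocation rule induces: it is the critical-value/Myerson payment written after the substitution $y\mapsto \exp(\epsilon' y)$, with the offset $z=\bar{C}_{i,s}(q_{i,s,t})\exp(-\epsilon'(q_{i,s,t}-\bar{C}_{i,s}(q_{i,s,t})))$ inserted so that the reported cost is absorbed into the weighting. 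Truthfulness then follows from the standard single-parameter argument. For individual rationality I would substitute the truthful report $\bar{C}_{i,s}(q_{i,s,t})=C_{i,s}(q_{i,s,t})$ into \eqref{eq:payment design} and verify $r_{i,s,t}\ge C_{i,s}(q_{i,s,t})$ directly; using $(q_{i,s,t}+z)\exp(\epsilon' w)=q_{i,s,t}\exp(\epsilon' w)+\bar{C}_{i,s}(q_{i,s,t})$ with $w=q_{i,s,t}-\bar{C}_{i,s}(q_{i,s,t})$ together with an elementary bound on $\int_0^a e^{\epsilon' y}\,dy$, this reduces to a short calculus check in which the role of $z$ is exactly to keep the leftover payment non-negative.

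\emph{Differential privacy.} Here the input is the collection $C$ of (claimed) cost functions, two inputs are neighbors if they differ for a single passenger, and the observable is the selected set $\mathcal{W}_{s,t}$ together with the induced prices. I would view one run of Algorithm~\ref{algo:subproblem} as an adaptive composition of exponential-mechanism steps, each using the quality function $q_{i,s,t}-\bar{C}_{i,s}(q_{i,s,t})$, whose sensitivity to a single-passenger change is bounded by $\Delta$, weighted by $\epsilon'$. A McSherry--Talwar-style bound controls the per-iteration privacy loss, but naive composition would make the (data-dependent) iteration count appear in the final parameter. The fix is to track the cumulative privacy-loss random variable along the run: each iteration contributes an increment of conditional expectation $O(\epsilon'^2\Delta^2)$ and magnitude $O(\epsilon'\Delta)$, and because the denominator sums in \eqref{eq:selection probability} change by an additive constant after the neighbor's term is removed, the relevant products telescope. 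A Bernstein/Azuma-type concentration bound on this sum then shows the total loss exceeds $\tfrac{\epsilon\Delta}{e(e-1)}$ only with probability at most $\delta$, which is exactly $\bigl(\tfrac{\epsilon\Delta}{e(e-1)},\delta\bigr)$-differential privacy in the sense of Definition~\ref{def:relaxed differential privacy}; the choice $\epsilon'=\epsilon/(e\ln(e/\delta))$ is calibrated precisely so that this argument closes.

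\emph{Near-optimal social welfare.} I would compare the iterative selection against an optimal integral solution of the decomposed problem \eqref{eq:decomposed social welfare maximization}. The exponential mechanism's utility guarantee says that at each iteration the selected passenger's quality $q_{i,s,t}-\bar{C}_{i,s}(q_{i,s,t})$ is within $O\!\bigl(\tfrac{1}{\epsilon'}\ln(\cdot)\bigr)$ of the best remaining option, except with small probability; feeding this into the classical greedy-covering analysis (the residual requirement after the $k$-th pick shrinks geometrically, so the accumulated slack against the optimum telescopes into a harmonic sum) yields $\Omega_s^\ast-O(\ln Q_s)$, and a union bound over the iterations with the confidence parameter set to a polynomial in $1/Q_s$ delivers the probability $1-Q_s^{-O(1)}$, as in the private-greedy analyses of \cite{huang2012exponential,dwork2014algorithmic}.

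\emph{Main obstacle.} The delicate part is the privacy claim: one must extract the clean, iteration-count-free bound $\bigl(\tfrac{\epsilon\Delta}{e(e-1)},\delta\bigr)$ from a sequentially adaptive mechanism whose length depends on the data, which forces the concentration/telescoping argument above rather than black-box composition. A closely related subtlety is confirming that truthfulness and individual rationality survive this adaptivity — a passenger's report perturbs the entire trajectory of selections, not just one step — which is exactly why \eqref{eq:payment design} must be the Myerson/entropy-corrected payment for the composed allocation rule and not merely a per-iteration payment.
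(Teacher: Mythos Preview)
The paper does not give its own proof here; it simply defers to the conference version \cite{Niu2019Differential}, which in turn builds on the exponential-mechanism framework of \cite{mcsherry2007mechanism,huang2012exponential} and the private-greedy analyses in \cite{dwork2014algorithmic}. Your proposal follows exactly that route---single-parameter monotonicity plus a Myerson/entropy-corrected payment for truthfulness and individual rationality, adaptive composition of exponential-mechanism steps with a concentration argument for the $(\tfrac{\epsilon\Delta}{e(e-1)},\delta)$ bound, and the exponential-mechanism utility guarantee fed into a greedy-covering analysis for the welfare claim---so it is essentially the same approach the paper points to.
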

\begin{proof}
We omit the proof due to space limit. See \cite{Niu2019Differential} for detailed proof.
\end{proof}

Given Algorithm \ref{algo:subproblem} for each decomposed problem, we present Algorithm \ref{algo:solution}, which utilizes Algorithm \ref{algo:subproblem} as subroutine, to solve for the selection profile $X$ for problem \eqref{eq:social welfare maximization 1}. Algorithm \ref{algo:solution} works as follows. It first makes $S$ copies of the passenger set $\mathcal{N}$, with each denoted as $\mathcal{N}_s$ for all $s\in\mathcal{S}$. Then Algorithm \ref{algo:subproblem} is invoked iteratively to compute the selected passengers for each OD pair $s$. The selection profile $X$ for time $t$ is finally returned as the union $\cup_{s}\mathcal{W}_{s,t}$.

\begin{center}
  	\begin{algorithm}[!htp]
  		\caption{Solution algorithm for decomposed problem \eqref{eq:decomposed social welfare maximization}.}
  		\label{algo:subproblem}
  		\begin{algorithmic}[1]
  			\Procedure{Decompose}{$B$}
  			\State \textbf{Input}: Bid profile $B$, current time $t$
  			\State \textbf{Output:} Selection profile $\mathcal{W}_{s,t}$
  			\State \textbf{Initialization}: Selected passenger set $\mathcal{W}_{s,t}\leftarrow\emptyset$, $\epsilon'\leftarrow\frac{\epsilon}{e\ln(e/\delta)}$
            \While{$|\mathcal{W}_{s,t}|\leq Q_s\land\mathcal{N}\neq\emptyset$}
            \For {$i\in\mathcal{N}$}
            \State Compute the probability of selecting passenger $i$ as \eqref{eq:selection probability}.
            \EndFor
            \If{passenger $i$ is chosen}
            \State $\mathcal{N}\leftarrow\mathcal{N}\setminus\{i\}$
            \EndIf
            \EndWhile
            \State \Return $\mathcal{W}_{s,t}$
  			\EndProcedure
  		\end{algorithmic}
  	\end{algorithm}
  \end{center}

  \begin{center}
  	\begin{algorithm}[!htp]
  		\caption{Solution algorithm for problem \eqref{eq:social welfare maximization 1}}
  		\label{algo:solution}
  		\begin{algorithmic}[1]
  			\Procedure{Social\_Max}{$B$}
  			\State \textbf{Input}: Bid profile $B$
  			\State \textbf{Output:} Selection profile $X$
  			\While{$t\leq T$}
  			\State \textbf{Initialization}: $\mathcal{N}_s=\mathcal{N}$ for all $s$
  			\State Remove all passengers that provide negative social welfare $B\leftarrow [(q_{i,s},\bar{C}_{i,s}):q_{i,s,t},\bar{C}_{i,s}(q_{i,s,t})\geq 0]$
            \For{$s\in\mathcal{S}$}
            \State \textproc{Decompose($B$)}
            \State $\mathcal{N}_s=\mathcal{N}_s\setminus\cup_{s'=1}^{s-1}\mathcal{W}_{s'}$
            \EndFor
            \State \Return $X=\cup_{s\in\mathcal{S}}\mathcal{W}_{s,t}$
            \State $t\leftarrow t+1$
            \EndWhile
  			\EndProcedure
  		\end{algorithmic}
  	\end{algorithm}
  \end{center}

We conclude this section by characterizing the properties achieved by Algorithm \ref{algo:solution}.

\begin{theorem}\label{thm:two-way property(efficient)}
Algorithm \ref{algo:solution} achieves truthfulness, individual rationality, and $\left(\frac{\epsilon\Delta S}{e(e-1)},\delta S\right)$-differential privacy. Moreover, Algorithm \ref{algo:subproblem} achieves near optimal social welfare $\Omega^\ast-SO(\ln Q_s)$ with at least probability $1-\frac{1}{{Q^*}^{O(1)}}$, where $\Omega^\ast$ is the maximum social welfare and $Q^*=\max_sQ_s$.
\end{theorem}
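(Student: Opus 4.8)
The plan is to lift the single–OD-pair guarantees of Lemma~2 (the lemma characterizing Algorithm~\ref{algo:subproblem}) to the full problem \eqref{eq:social welfare maximization 1} by exploiting the decomposition structure set up in Algorithm~\ref{algo:solution}, and then to pay the expected ``price'' of composition in the privacy and optimality parameters. First I would argue \emph{truthfulness} and \emph{individual rationality}: these are \emph{per-passenger} properties, and since each passenger $i$ is geographically tied to a single OD pair $s$ (for every other $s'\neq s$ the inconvenience cost is effectively infinite, as stated in the problem overview), passenger $i$ participates in exactly one invocation of Algorithm~\ref{algo:subproblem}. Hence the payment \eqref{eq:payment design} it may receive and the selection event it faces are governed entirely by that single decomposed problem, for which Lemma~2 already gives truthfulness (bidding $\bar C_{i,s}=C_{i,s}$ is dominant) and individual rationality ($U_{i,t}\ge 0$ when selected). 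The line ``Remove all passengers that provide negative social welfare'' in Algorithm~\ref{algo:solution} does not interfere, since a truthful passenger with nonnegative marginal value is never removed, and a removed passenger obtains utility $0$. So these two properties transfer verbatim.

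Next I would handle \emph{differential privacy}. Lemma~2 states that each call to Algorithm~\ref{algo:subproblem} is $\bigl(\tfrac{\epsilon\Delta}{e(e-1)},\delta\bigr)$-differentially private. Algorithm~\ref{algo:solution} runs Algorithm~\ref{algo:subproblem} once for each of the $S$ OD pairs, on disjoint copies $\mathcal{N}_s$ of the passenger set with the previously-selected passengers removed. Two neighbouring cost profiles $C_1,C_2$ differing in one element differ in the input to exactly one of these $S$ subroutine calls (again because a passenger lives at one OD pair), so naively this is even parallel composition and would cost nothing; but to be safe against the shared-removal coupling across OD pairs I would invoke the \emph{basic sequential composition theorem} for $(\epsilon,\delta)$-DP, which adds the $\epsilon$'s and the $\delta$'s over the $S$ stages, yielding $\bigl(\tfrac{\epsilon\Delta S}{e(e-1)},\delta S\bigr)$-differential privacy for the composed selection map $X=\cup_s \mathcal{W}_{s,t}$. (Over the time horizon $t=1,\dots,T$ the released $X_t$'s are computed from fresh randomness, which would contribute a further factor of $T$; I would state the bound per time step, or absorb $T$ if the adversary's view $\mathcal{I}_t^{two}$ is the whole history — this is a modelling choice I'd make explicit.)

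Finally, \emph{near-optimality}. Since the objective of \eqref{eq:social welfare maximization 1} separates additively over $s$ under constraint \eqref{eq:social welfare maximization constr 1}, we have $\Omega^\ast=\sum_s \Omega_s^\ast$, and the welfare delivered by Algorithm~\ref{algo:solution} is $\sum_s \Omega_s^{\mathrm{alg}}$. Lemma~2 gives $\Omega_s^{\mathrm{alg}}\ge \Omega_s^\ast - O(\ln Q_s)$ with probability at least $1-1/Q_s^{O(1)}$. Summing over the $S$ OD pairs gives the additive loss $\sum_s O(\ln Q_s) \le S\,O(\ln Q^\ast)$ with $Q^\ast=\max_s Q_s$, and a union bound over the $S$ (independent) failure events gives overall success probability at least $1-\sum_s 1/Q_s^{O(1)} \ge 1-1/(Q^\ast)^{O(1)}$, which is the stated bound. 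The main obstacle is the privacy step: one must be careful that removing already-selected passengers in Algorithm~\ref{algo:solution} (line 9) creates a dependence of later subroutine inputs on earlier outputs, so the clean ``parallel composition gives no loss'' argument needs the sequential/adaptive composition theorem to be fully rigorous, and one must verify that the neighbouring-input sensitivity is still confined to a single stage under this adaptive coupling — that, plus pinning down the $T$-dependence in the privacy budget, is where the real care is needed; the optimality and incentive parts are essentially bookkeeping on top of Lemma~2.
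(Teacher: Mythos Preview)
The paper itself omits the proof entirely, deferring to the conference version \cite{Niu2019Differential}, so there is no in-paper argument to compare against; your composition-and-union-bound strategy (lift Lemma~2 per OD pair, then apply sequential $(\epsilon,\delta)$-DP composition over the $S$ calls and a union bound for the welfare guarantee) is exactly the natural route and matches the structure the authors set up. Two minor remarks: the failure events across OD pairs are \emph{not} independent (line~9 of Algorithm~\ref{algo:solution} couples them), but the union bound does not require independence, so your argument still goes through; and your observation that the per-time-step privacy bound would pick up an additional factor of $T$ if the adversary observes the full history $\mathcal{I}_t^{two}$ is a genuine point the theorem statement leaves implicit.
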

\begin{proof}
We omit the proof due to space limit. See \cite{Niu2019Differential} for detailed proof.
\end{proof}
\section{Solution for One-way Communication Setting}\label{sec:one way}

In this section, we analyze the problem formulated in Section \ref{subsec:one way}. We first present an incentive mechanism design without privacy guarantee. Then we generalize the analysis and give an incentive design that satisfies differential privacy. 

\subsection{Incentive Mechanism Design without Privacy Guarantee}\label{subsec:price without privacy}
Different from the two-way communication scenario, the passengers observe the incentive price signal sent by the government and respond to it by maximizing their own utility. In the following, we first analyze passengers' best responses to price signal. Then we analyze how the government should design the incentive price to achieve optimal social welfare.
\begin{lemma}\label{lemma:passenger response}
Given an incentive price $p_{s,t}$, a selfish and rational passenger would contribute $q_{i,s}(p_{s,t})=\left[C_{i,s}^{\prime^{-1}}(q_{i,s}(p_{s,t}))\right]^+$ amount of traffic offload to maximize its utility $U_{i,t}(\mathbf{p}_t)$. 
\end{lemma}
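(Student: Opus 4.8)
The plan is to reduce the passenger's choice to a one-dimensional concave maximization for the single relevant OD pair, and then read the optimal offload off a first-order condition combined with the non-negativity constraint. First I would fix a passenger $i$ and a price vector $\mathbf{p}_t$ and observe that in $U_{i,t}(\mathbf{p}_t)=\sum_{s\in\mathcal{S}}\{p_{s,t}q_{i,s}(p_{s,t})-C_{i,s}(q_{i,s}(p_{s,t}))\}$ the term indexed by $s$ depends only on the scalar $p_{s,t}$ and on the offload level the passenger commits to for pair $s$, so the passenger may optimize each term separately; this reduces the problem to characterizing the best-response function $q_{i,s}(\cdot)$ pairwise. Recalling from the problem overview that a passenger is physically close to only one OD pair and that the inconvenience cost of every other pair is treated as infinite, the optimal offload is $0$ for all but that one pair, and for the relevant $s$ it suffices to solve $\max_{q\ge 0}\,g(q)$ with $g(q):=p_{s,t}q-C_{i,s}(q)$.

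Next I would solve this scalar program. Since $C_{i,s}$ is continuously differentiable and convex, $g$ is continuously differentiable and concave on $[0,\infty)$, so the KKT conditions are both necessary and sufficient for a global maximizer. Writing $g'(q)=p_{s,t}-C_{i,s}'(q)$, an interior stationary point satisfies $C_{i,s}'(q)=p_{s,t}$; since $C_{i,s}$ is strictly increasing and convex, $C_{i,s}'$ is positive and nondecreasing, so this equation is solved by $q=C_{i,s}^{\prime^{-1}}(p_{s,t})$ whenever $p_{s,t}$ lies in the range of $C_{i,s}'$ on $(0,\infty)$. If instead $p_{s,t}\le C_{i,s}'(0)$, then $g'(q)\le 0$ for all $q\ge 0$, so $g$ is non-increasing on $[0,\infty)$ and the constrained maximizer is the boundary point $q=0$. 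Combining the two cases yields $q_{i,s}(p_{s,t})=\left[C_{i,s}^{\prime^{-1}}(p_{s,t})\right]^{+}$, i.e. the passenger offloads up to the point where its marginal inconvenience cost matches the offered price, truncated at zero, which is the asserted best response. As a by-product I would note that $g(0)=p_{s,t}\cdot 0-C_{i,s}(0)=0$, so the attained utility is non-negative, consistent with the remark in Section~\ref{subsec:one way} that individual rationality holds automatically here.

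The hard part is not the calculus but the regularity that makes $C_{i,s}^{\prime^{-1}}$ well-defined and guarantees a finite maximizer: mere convexity makes $C_{i,s}'$ only nondecreasing, so $C_{i,s}^{\prime^{-1}}$ may fail to be single-valued (a flat piece of $C_{i,s}'$ at height $p_{s,t}$ produces an interval of maximizers), and if $C_{i,s}'$ stays below $p_{s,t}$ for all $q$ then $g$ has no finite maximizer. I would handle this by invoking --- or stating explicitly --- a mild strengthening of the standing hypotheses, namely strict convexity of $C_{i,s}$ together with $\lim_{q\to\infty}C_{i,s}'(q)=+\infty$ (super-linear growth of discomfort, which is physically natural and is already implicit in the paper's assumption that $q_{i,s}(\cdot)$ is strictly increasing and defined for all prices); alternatively, one may fix the convention that $C_{i,s}^{\prime^{-1}}$ returns the smallest minimizer and restrict to prices in the attainable range. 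Under either convention the argument above carries through and delivers the claimed expression for the passenger's best response.
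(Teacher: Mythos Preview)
Your argument is correct and is exactly the standard first-order analysis one expects here: decouple over $s$, maximize the concave scalar function $q\mapsto p_{s,t}q-C_{i,s}(q)$ on $[0,\infty)$, and project the stationary point onto the feasible set. The paper omits the proof and defers to its extended version, but this is almost certainly the same route taken there; there is really only one natural way to prove this lemma.

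Two small remarks. First, you correctly silently repaired the apparent typo in the lemma statement: as written, $q_{i,s}(p_{s,t})=\big[C_{i,s}^{\prime^{-1}}(q_{i,s}(p_{s,t}))\big]^{+}$ is circular, and your derivation shows the intended expression is $q_{i,s}(p_{s,t})=\big[C_{i,s}^{\prime^{-1}}(p_{s,t})\big]^{+}$. Second, your closing paragraph on the regularity needed for $C_{i,s}^{\prime^{-1}}$ to be well-defined (strict convexity and coercivity of $C_{i,s}'$) is a genuine point the paper glosses over; the paper's standing assumption that $q_{i,s}(\cdot)$ is strictly increasing in the price is, as you note, tantamount to assuming strict convexity of $C_{i,s}$, so your fix is consistent with the setup.
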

\begin{proof}
We omit the proof due to space constraint. See \cite{report} for detailed proof.
\end{proof}
We have the following two observations by Lemma \ref{lemma:passenger response}. First, a selfish and rational passenger that optimizes its utility will contribute the amount of traffic offload $C_{i,s}^{\prime^{-1}}(q_{i,s}(p_{s,t}))$ if and only if it can obtain non-negative utility. Moreover, by observing the participation of each passenger, the government can infer the gradients of inconvenience cost functions.

Taking the amount of traffic offload of each participating passenger $q_{i,s}(p_{s,t})$ as feedback, the government can then use the gradient descent algorithm \cite{zinkevich2003online} to approximately minimize the social cost. In Algorithm \ref{algo:incentive}, the government first initializes a set of learning rates $\{\eta_1,\cdots,\eta_T\}$ that adjusts the step size between two time instants. In the meanwhile, Algorithm \ref{algo:incentive} initializes $\mathbf{p}_1$ of small value for time $t=1$. Then for each time step $t=2,\cdots,T$, the government iteratively updates the incentive price $\mathbf{p}_{t+1}$ as $\max\left\{p_{s,t}-\sum_i\eta_tC_{i,s}^{\prime}(q_{i,s}^*),0\right\}.$

	\begin{algorithm}
		\caption{Computation of incentive price}
		\label{algo:incentive}
		\begin{algorithmic}[1]	
		    \State Initialize the sequence of learning rates $\eta_1,\cdots,\eta_{T-1}$
		    \While {$t\leq T$}
			\State Initialize incentive price $\mathbf{p}_1>0$ for time step $t=1$ arbitrarily
			\State Update incentive price as $p_{s,t+1}=\max\left\{p_{s,t}-\sum_i\eta_tC_{i,s}^{\prime}(q_{i,s}^*),0\right\}$ for all $s$
			\EndWhile
		\end{algorithmic}
	\end{algorithm}

In the following, we characterize Algorithm \ref{algo:incentive} by analyzing the social cost incurred using the incentive price returned by Algorithm \ref{algo:incentive}. Analogous to online convex algorithm \cite{zinkevich2003online}, we define the regret of the government. The regret over time horizon $T$ is defined as
\begin{equation}\label{eq:regret}
    R(T)=\Lambda(\mathbf{p})-\Lambda^*,
\end{equation}
where $\Lambda(\mathbf{p})$ is the social cost when when selecting a sequence of incentive prices $\{p_{s,t}\}_{s=1,t=1}^{S,T}$ as defined in \eqref{eq:gov utility}, and 
\begin{multline}
    \Lambda^*=\min_{p}\sum_t\sum_{s\in\mathcal{S}}\Bigg\{\sum_{i\in\mathcal{N}}C_{i,s}(q_{i,s}(p))\\+\beta_s\left[Q_{s,t}-\sum_{i\in\mathcal{N}}q_{i,s}(p) \right]^+\Bigg\}
\end{multline}
is the optimal social cost when using a fixed price.
Then the regret \eqref{eq:regret} models the difference between the social cost when selecting a sequence of incentive prices $\{\mathbf{p}_{t}\}_{t=1}^{T}$ and optimal social cost from using a fixed price $p_s^*$ for each $s$. 

In the following, we characterize the mechanism design proposed for one-way communication by analyzing the regret \eqref{eq:regret}. In particular, we analyze the regret \eqref{eq:regret} by showing that it satisfies Hannan consistency, i.e., 
\begin{equation}\label{eq:avg regret}
    \limsup_{T\rightarrow\infty}\frac{R(T)}{T}\rightarrow 0.
\end{equation}
Hannan consistency implies that the average regret \eqref{eq:avg regret} vanishes when the time horizon approaches infinity. We define the following notations. Define row vectors $\mathbf{g}_{s,t}\in\mathbb{R}^N$ and $\mathbf{h}_{s,t}\in\mathbb{R}^N$ as :
\begin{align}
    \mathbf{g}_{s,t}&=\left[C'_{1,s}(q_{i,s}(p_{s,t})),\cdots,C'_{N,s}(q_{i,s}(p_{s,t}))\right]\label{eq:g definition}\\
    \mathbf{h}_{s,t}&=\left[q'_{1,s}(p_{s,t}),\cdots,q'_{N,s}(p_{s,t})\right].\label{eq:h definition}
\end{align}
We denote the vectors $\mathbf{g}_{s,t}$ and $\mathbf{h}_{s,t}$ that are associated with $p_{s,t}=p_s^*$ as $\mathbf{g}^*_{s,t}$ and $\mathbf{h}^*_{s,t}$, respectively. Let $\Bar{g}=\max_{s,t}\mathbf{g}_{s,t}(p_{s,t})$ and $\underline{g}=\min_{s,t}g_{s,t}$. Denote the maximum incentive price the government would issue as $\Bar{p}$. We also define column vectors for all $s$ and $t$ as $\mathbf{q}_{s,t}=\left[q_{1,s}(p_{s,t}),\cdots,q_{N,s}(p_{s,t})\right]^T$.
Similarly, vector $\mathbf{q}^*_{s,t}$ represents the vector associated with $p_{s,t}=p_s^*$. We finally define $k_{s,t}=\mathbf{g}_{s,t}\cdot \mathbf{h}_{s,t}+\beta_s\mathbf{h}_{s,t}\mathbf{1}_N$, where $\mathbf{g}_{s,t}\cdot \mathbf{h}_{s,t}$ is the dot product of $\mathbf{g}_{s,t}$ and $\mathbf{h}_{s,t}$. Let $\Bar{k}=\max_{s,t}k_{s,t}$ be the maximum $k_{s,t}$ for all $s$ and $t$. Next we show that regret \eqref{eq:regret} is upper bounded.

\begin{lemma}\label{lemma:regret}
The regret of Algorithm \ref{algo:incentive} is bounded as
\begin{equation}\label{eq:regret bound}
    R(T)\leq \sum_s\left\{\frac{\Bar{p}^2k_{s,T}}{2\eta_T\mathbf{g}_{s,T}\mathbf{1}_N}+\sum_{t=1}^T\frac{\eta_t\Bar{g}^2N^2k_{s,t}}{2\mathbf{g}_{s,t}\mathbf{1}_N}\right\}.
\end{equation}
\end{lemma}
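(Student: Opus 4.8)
The plan is to treat Algorithm~\ref{algo:incentive} as a projected online (sub)gradient method and adapt the regret analysis of Zinkevich~\cite{zinkevich2003online}, keeping careful track of the fact that the direction actually used in the update is a surrogate for the true gradient of the penalized social cost. First I would decompose the problem over OD pairs: writing $f_{s,t}(p)=\sum_{i\in\mathcal N}C_{i,s}(q_{i,s}(p))+\beta_s[Q_{s,t}-\sum_{i\in\mathcal N}q_{i,s}(p)]^+$ for the instantaneous social cost on pair $s$, we have $\Lambda(\mathbf p)=\sum_s\sum_t f_{s,t}(p_{s,t})$ and $\Lambda^*\geq\sum_s\min_{p}\sum_t f_{s,t}(p)=\sum_s\sum_t f_{s,t}(p_s^*)$, so $R(T)\leq\sum_s R_s(T)$ with $R_s(T)=\sum_{t=1}^T\big(f_{s,t}(p_{s,t})-f_{s,t}(p_s^*)\big)$, and it suffices to bound each $R_s(T)$. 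Each $f_{s,t}$ is convex in $p$ (composition of the convex nondecreasing $C_{i,s}$ and of $[\cdot]^+$ with the monotone $q_{i,s}(\cdot)$, using the assumptions of Section~\ref{subsec:one way} and Lemma~\ref{lemma:passenger response}), hence $f_{s,t}(p_{s,t})-f_{s,t}(p_s^*)\leq \nu_{s,t}(p_{s,t}-p_s^*)$ for any subgradient $\nu_{s,t}$ of $f_{s,t}$ at $p_{s,t}$; by the chain rule such a subgradient equals $\mathbf g_{s,t}\!\cdot\!\mathbf h_{s,t}$ minus an element of $[0,\beta_s\mathbf h_{s,t}\mathbf 1_N]$ coming from the deficit penalty, so $|\nu_{s,t}|\leq k_{s,t}=\mathbf g_{s,t}\!\cdot\!\mathbf h_{s,t}+\beta_s\mathbf h_{s,t}\mathbf 1_N$.

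Next I would write the update of Algorithm~\ref{algo:incentive} as the projected step $p_{s,t+1}=\Pi_{[0,\Bar p]}\big(p_{s,t}-\eta_t d_{s,t}\big)$ with direction $d_{s,t}=\sum_i C'_{i,s}(q_{i,s}(p_{s,t}))=\mathbf g_{s,t}\mathbf 1_N>0$, and invoke non-expansiveness of the Euclidean projection onto $[0,\Bar p]$ together with $p_s^*\in[0,\Bar p]$: $(p_{s,t+1}-p_s^*)^2\leq(p_{s,t}-p_s^*)^2-2\eta_t d_{s,t}(p_{s,t}-p_s^*)+\eta_t^2 d_{s,t}^2$, i.e.\ $d_{s,t}(p_{s,t}-p_s^*)\leq\frac{1}{2\eta_t}\big[(p_{s,t}-p_s^*)^2-(p_{s,t+1}-p_s^*)^2\big]+\frac{\eta_t}{2}d_{s,t}^2$. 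Since $d_{s,t}>0$, $\mathbf g_{s,t}\mathbf 1_N=d_{s,t}$, and $|\nu_{s,t}|\leq k_{s,t}$, I would pass from the implemented direction back to the true subgradient at the cost of the factor $k_{s,t}/(\mathbf g_{s,t}\mathbf 1_N)$, obtaining $f_{s,t}(p_{s,t})-f_{s,t}(p_s^*)\leq \frac{k_{s,t}}{\mathbf g_{s,t}\mathbf 1_N}\Big(\frac{(p_{s,t}-p_s^*)^2-(p_{s,t+1}-p_s^*)^2}{2\eta_t}+\frac{\eta_t}{2}d_{s,t}^2\Big)$.

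Summing this over $t=1,\dots,T$ and then over $s\in\mathcal S$ gives \eqref{eq:regret bound} after two simplifications. For the step-size term, bound $d_{s,t}^2=(\mathbf g_{s,t}\mathbf 1_N)^2\leq N^2\Bar g^2$, which turns $\frac{k_{s,t}}{\mathbf g_{s,t}\mathbf 1_N}\cdot\frac{\eta_t}{2}d_{s,t}^2$ into $\frac{\eta_t\Bar g^2 N^2 k_{s,t}}{2\mathbf g_{s,t}\mathbf 1_N}$, the second summand of \eqref{eq:regret bound}. For the telescoping term, use $(p_{s,t}-p_s^*)^2\leq\Bar p^2$ for every iterate, discard the nonnegative squared-distance remainders, and collect the surviving boundary contribution as $\frac{\Bar p^2 k_{s,T}}{2\eta_T\mathbf g_{s,T}\mathbf 1_N}$; this yields the first summand. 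Choosing, e.g., $\eta_t\propto 1/\sqrt t$ then makes the right-hand side of \eqref{eq:regret bound} grow as $O(\sqrt T)$, so $R(T)/T\to 0$ and Hannan consistency \eqref{eq:avg regret} follows.

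The main obstacle I expect is the step in the second paragraph: justifying the replacement of the true subgradient of $f_{s,t}$ by the direction $\sum_i C'_{i,s}(q_{i,s}(p_{s,t}))$ that Algorithm~\ref{algo:incentive} actually uses. This requires simultaneously absorbing the omitted chain-rule factors $q'_{i,s}(p_{s,t})$ (captured by $\mathbf h_{s,t}$) and the set-valued subgradient of the nonsmooth deficit penalty $\beta_s[\cdot]^+$, and it is precisely here that the monotonicity and positivity of $C_{i,s}$ and $q_{i,s}$ must be used so that every inequality keeps the correct orientation in both cases $p_{s,t}\gtrless p_s^*$; the surrogate-to-gradient ratio is what introduces the factor $k_{s,t}/(\mathbf g_{s,t}\mathbf 1_N)$ into \eqref{eq:regret bound}. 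A secondary nuisance is propagating the time-varying learning rates $\eta_t$ through the telescoping sum so that, after bounding the iterate distances by $\Bar p$, only the $t=T$ boundary term remains.
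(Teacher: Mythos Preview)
Your overall strategy---decompose over OD pairs, linearize the per-step loss, feed the projected-update squared-distance recursion $(p_{s,t+1}-p_s^*)^2\le(p_{s,t}-p_s^*)^2-2\eta_t d_{s,t}(p_{s,t}-p_s^*)+\eta_t^2 d_{s,t}^2$, then telescope with the factor $k_{s,t}/(\mathbf g_{s,t}\mathbf 1_N)$---is exactly the route the paper takes, and your handling of the telescoping with time-varying $\eta_t$ and the final bound $d_{s,t}^2\le N^2\Bar g^2$ matches the paper essentially line for line.

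The one real gap is your linearization step. You assert that $f_{s,t}(p)=\sum_i C_{i,s}(q_{i,s}(p))+\beta_s[Q_{s,t}-\sum_i q_{i,s}(p)]^+$ is convex in $p$ as a ``composition of the convex nondecreasing $C_{i,s}$ and of $[\cdot]^+$ with the monotone $q_{i,s}(\cdot)$''. That composition rule is false: convex nondecreasing composed with a merely monotone inner function need not be convex (take $C(q)=q^2$, $q(p)=p^{1/3}$, giving $p^{2/3}$). The paper never claims $f_{s,t}$ is convex in $p$; in fact its own proof explicitly invokes \emph{concavity} of $q_{i,s}(\cdot)$, under which $C_{i,s}\circ q_{i,s}$ is generally not convex. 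Instead the paper linearizes in $q$-space first---using convexity of $C_{i,s}$ at $q_{i,s}(p_{s,t})$ to bound $\sum_i C_{i,s}(q_{i,s}(p_s^*))$ from below by $\mathbf g_{s,t}(\mathbf q_{s,t}^*-\mathbf q_{s,t})+\sum_i C_{i,s}(q_{i,s}(p_{s,t}))$---then converts the $q$-difference to a $p$-difference via the first-order concavity inequality $q_{i,s}(p_s^*)\le q_{i,s}(p_{s,t})+q'_{i,s}(p_{s,t})(p_s^*-p_{s,t})$, and treats the nonsmooth penalty with $[a-b]^+\ge[a]^+-[b]^+$. This two-stage linearization is precisely what produces the coefficient $k_{s,t}$ (with an indicator $I_{\{p_s^*\ge p_{s,t}\}}$ on the $\beta_s$ piece) and is how the surrogate-to-gradient conversion you correctly flag as the crux is actually justified. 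If you replace your convexity-in-$p$ shortcut by this argument, the remainder of your proof is the paper's.
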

\begin{proof}
The proof is motivated by \cite{zinkevich2003online}. Denote the optimal incentive price associated with optimal social cost $\Lambda^*$ as $p_{s}^*$ for each OD pair $s$. Due to convexity of inconvenience cost functions $C_{i,s}(\cdot)$, for any $q_{i,s}(p_{s,t})$ and $p_{s,t}$ we have 
\begin{multline*}
    \sum_iC_{i,s}(q_{i,s}(p_{s,t}))+\beta_s\left[Q_{s,t}-\sum_{i\in\mathcal{N}}q_{i,s}(p_{s,t}) \right]^+\\
    \geq\sum_i\bigg\{C'_{i,s}\left(q^*_{i,s}\right)\left(q_{i,s}(p_{s,t})-q^*_{i,s}\right)+C_{i,s}(q^*_{i,s})\bigg\}\\
    +\beta_s\left[Q_{s,t}-\sum_{i\in\mathcal{N}}q_{i,s}(p_{s,t}) \right]^+.
\end{multline*}
By definition of $\mathbf{g}_{s,t}$ \eqref{eq:g definition}, we have that the optimal social cost satisfies the following inequalities:
\begin{align}
    &\sum_iC_{i,s}(q_{i,s}(p^*_{s}))+\beta_s\left[Q_{s,t}-\sum_{i\in\mathcal{N}}q_{i,s}(p^*_{s})\right]^+\nonumber\\
    \geq&~\mathbf{g}_{s,t}\left(\mathbf{q}^*_{s,t}-\mathbf{q}_{s,t}\right)+\sum_iC_{i,s}\left(q_{i,s}(p_{s,t})\right)\nonumber\\
    &+\beta_s\left[Q_{s,t}-\sum_{i\in\mathcal{N}}q_{i,s}(p^*_{s}) \right]^+\label{eq:Lemma3 ineq 1}\\
    \geq&~\mathbf{g}_{s,t}\left(\mathbf{q}^*_{s,t}-\mathbf{q}_{s,t}\right)+\sum_iC_{i,s}(q_{i,s}(p_{s,t}))\nonumber\\
    &+\beta_s\left[Q_{s,t}-\mathbf{h}_{s,t}\mathbf{1}_N(p^*_{s}-p_{s,t})-\sum_iq_{i,s}(p_{s,t})\right]^+\label{eq:Lemma3 ineq 2}\\
    \geq&~\mathbf{g}_{s,t}\left(\mathbf{q}^*_{s,t}-\mathbf{q}_{s,t}\right)+\sum_iC_{i,s}(q_{i,s}(p_{s,t}))\nonumber\\
    &+\beta_s\left[Q_{s,t}-\sum_iq_{i,s}(p_{s,t})\right]^+-\beta_s\left[\mathbf{h}_{s,t}\mathbf{1}_N(p^*_{s}-p_{s,t})\right]^+,\label{eq:Lemma3 ineq 3}
\end{align}
where $\mathbf{1}_N=[1,\cdots,1]^T$ with dimension $N$, inequality \eqref{eq:Lemma3 ineq 1} follows by the convexity of $C_{i,s}(\cdot)$, inequality \eqref{eq:Lemma3 ineq 2} follows by the first order Taylor expansion of concave function $q_{i,s}(\cdot)$, and inequality \eqref{eq:Lemma3 ineq 3} holds by the fact that $[a-b]^+\geq [a]^+-[b]^+$. Rearranging the inequality above, we have that
\begin{align*}
    &\sum_iC_{i,s}(q_{i,s}(p_{s,t}))+\beta_s\left[Q_{s,t}-\sum_{i\in\mathcal{N}}q_{i,s}(p_{s,t})\right]^+\\
    &-\sum_iC_{i,s}(q_{i,s}(p^*_{s}))-\beta_s\left[Q_{s,t}-\sum_{i\in\mathcal{N}}q_{i,s}(p^*_{s}) \right]^+\\
    \leq~& \beta_s\left[\mathbf{h}_{s,t}\mathbf{1}_N(p^*_{s}-p_{s,t})\right]^+-\mathbf{g}_{s,t}\left(\mathbf{q}^*_{s,t}-\mathbf{q}_{s,t}\right)\\
    \leq~& \beta_s\left[\mathbf{h}_{s,t}\mathbf{1}_N(p^*_s-p_{s,t})\right]^+-\mathbf{g}_{s,t}\cdot \mathbf{h}_{s,t}(p^*_s-p_{s,t})\\
    \leq~&k_{s,t}(p^*_s-p_{s,t}),
\end{align*}
where $k_{s,t}=I_{\{p^*\geq p_{s,t}\}}\beta_s\mathbf{h}_{s,t}\mathbf{1}_N-\mathbf{g}_{s,t}\cdot \mathbf{h}_{s,t}$, $I_{\{p^*\geq p_{s,t}\}}$ is an indicator that equals to $1$ if $p^*\geq p_{s,t}$ and $0$ otherwise, and $\mathbf{g}_{s,t}\cdot \mathbf{h}_{s,t}$ represents the dot product of $\mathbf{g}_{s,t}$ and $\mathbf{h}_{s,t}$. At time step $t+1$, we have
\begin{align}
    &(p_{s,t+1}-p^*_{s})^2
    \leq\left(p_{s,t}-\eta_t\mathbf{g}_{s,t}\mathbf{1}_N-p^*_s\right)^2\label{eq:Lemma3 ineq 4}\\
    \leq~&(p_{s,t}-p^*_s)^2-2\eta_t\mathbf{g}_{s,t}\mathbf{1}_N(p_{s,t}-p^*_s)+\eta_t^2\Bar{g}^2N^2,\label{eq:Lemma3 ineq 5}
\end{align}
where inequality \eqref{eq:Lemma3 ineq 4} holds by the updating rule of $p_{s,t}$ and inequality \eqref{eq:Lemma3 ineq 5} holds due to $\mathbf{g}_{s,t}\mathbf{1}_N\leq \Bar{g}N$. Then we obtain
$
    \mathbf{g}_{s,t}\mathbf{1}_N(p^*_s-p_{s,t})\leq \frac{1}{2\eta_t}\Big[(p_{s,t}-p^*_s)^2-(p_{s,t+1}-p^*_{s})^2+\eta_t^2\Bar{g}^2N^2\Big].
$
By \eqref{eq:regret}, we have
\begin{align*}
    &R(T)=\Lambda(\mathbf{p})-\Lambda^*\\
    =~&\sum_t\sum_s\Bigg\{\sum_iC_{i,s}(q_{i,s}(p_{s,t}))+\beta_s\left[Q_{s,t}-\sum_{i\in\mathcal{N}}q_{i,s}(p_{s,t}) \right]^+\\
    &\quad-\sum_iC_{i,s}(q_{i,s}(p^*_{s}))-\beta_s\left[Q_{s,t}-\sum_{i\in\mathcal{N}}q_{i,s}(p^*_{s}) \right]^+\Bigg\}\\
    \leq~&\sum_s\frac{\Bar{p}^2}{2}\bigg\{\frac{k_{s,1}}{\eta_1\mathbf{g}_{s,1}\mathbf{1}_N}+\sum_{t=2}^T\left(\frac{k_{s,t}}{\eta_t\mathbf{g}_{s,t}\mathbf{1}_N}-\frac{k_{s,t-1}}{\eta_{t-1}\mathbf{g}_{s,t-1}\mathbf{1}_S}\right)\\
    &\quad\quad\quad\quad\quad\quad\quad\quad\quad\quad\quad\quad+\sum_{t=1}^T\frac{\eta_t\Bar{g}^2N^2k_{s,t}}{2\mathbf{g}_{s,t}\mathbf{1}_N}\bigg\}\\
    =~&\sum_s\left\{\frac{\Bar{p}^2k_{s,T}}{2\eta_T\mathbf{g}_{s,T}\mathbf{1}_N}+\sum_{t=1}^T\frac{\eta_t\Bar{g}^2N^2k_{s,t}}{2\mathbf{g}_{s,t}\mathbf{1}_N}\right\},
\end{align*}
which completes our proof.
\end{proof}

Leveraging Lemma \ref{lemma:regret}, we are ready to show Hannan consistency holds for the proposed incentive mechanism design.
\begin{proposition}
Let $\eta_t=\frac{1}{\sqrt{t}}$. The regret defined in \eqref{eq:regret} along with the incentive design proposed in Algorithm \ref{algo:incentive} achieves the Hannan consistency.
\end{proposition}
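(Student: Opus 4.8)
The plan is to substitute $\eta_t = 1/\sqrt{t}$ directly into the regret bound from Lemma~\ref{lemma:regret} and show that the resulting expression grows sublinearly in $T$, so that dividing by $T$ and taking $\limsup$ yields zero. Concretely, the bound reads
\[
R(T) \le \sum_s \left\{ \frac{\bar{p}^2 k_{s,T}}{2\eta_T \mathbf{g}_{s,T}\mathbf{1}_N} + \sum_{t=1}^T \frac{\eta_t \bar{g}^2 N^2 k_{s,t}}{2\mathbf{g}_{s,t}\mathbf{1}_N} \right\},
\]
and the first task is to bound each piece. For the first term, using $\eta_T = 1/\sqrt{T}$ gives a factor $\sqrt{T}$, and since $k_{s,T}$, $\bar{p}$, and $\mathbf{g}_{s,T}\mathbf{1}_N$ are all bounded by constants independent of $T$ (here I would invoke the standing assumptions: $\bar{p}$ is the maximum incentive price, $\bar{k} = \max_{s,t} k_{s,t}$ is finite, and $\mathbf{g}_{s,t}\mathbf{1}_N$ is bounded below by $N\underline{g} > 0$ since the cost functions are strictly increasing), this term is $O(\sqrt{T})$. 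For the second term, I would factor out the bounded quantities to get something like $\tfrac{\bar{g}^2 N^2 \bar{k}}{2 N \underline{g}} \sum_{t=1}^T \eta_t = \tfrac{\bar{g}^2 N \bar{k}}{2\underline{g}} \sum_{t=1}^T \tfrac{1}{\sqrt{t}}$, and then use the standard estimate $\sum_{t=1}^T t^{-1/2} \le 2\sqrt{T}$ (by comparison with $\int_0^T x^{-1/2}\,dx$). Summing over the $S$ OD pairs multiplies everything by $S$, which is a constant.

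Putting these together yields $R(T) \le c_1 S \sqrt{T} + c_2 S \sqrt{T} = O(\sqrt{T})$ for constants $c_1, c_2$ depending only on $\bar{p}, \bar{g}, \underline{g}, \bar{k}, N$. Then
\[
\limsup_{T\to\infty} \frac{R(T)}{T} \le \limsup_{T\to\infty} \frac{O(\sqrt{T})}{T} = \limsup_{T\to\infty} O\!\left(\frac{1}{\sqrt{T}}\right) = 0,
\]
which is exactly the Hannan consistency condition~\eqref{eq:avg regret}. The proof is essentially a one-paragraph calculation once Lemma~\ref{lemma:regret} is in hand.

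The only real subtlety — and the step I would be most careful about — is justifying that the coefficients appearing in the bound are genuinely bounded uniformly in $t$ and $T$. In particular, $\mathbf{g}_{s,t}\mathbf{1}_N = \sum_i C'_{i,s}(q_{i,s}(p_{s,t}))$ sits in a denominator, so one needs a uniform positive lower bound; this follows from strict monotonicity of $C_{i,s}$ (so $C'_{i,s} > 0$) together with the fact that, under the update rule, $p_{s,t}$ and hence $q_{i,s}(p_{s,t})$ stay in a compact range, but it is worth stating explicitly rather than sweeping under the rug. Similarly $k_{s,T}$ in the first term must be bounded above uniformly — again immediate from $\bar{k} = \max_{s,t} k_{s,t}$ being well-defined. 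If one wanted to be fully rigorous about the $\limsup$ manipulation, I would note that for any fixed $T$ the bound is finite, and the $O(\cdot)$ notation hides only $t$- and $T$-independent constants, so the division by $T$ and the limit commute without issue. No deeper obstacle is expected: the heavy lifting was done in Lemma~\ref{lemma:regret}, and this proposition is the routine asymptotic consequence of choosing the canonical $\Theta(1/\sqrt{t})$ step size familiar from online convex optimization~\cite{zinkevich2003online}.
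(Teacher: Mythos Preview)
Your proposal is correct and is exactly the intended argument: substitute $\eta_t = 1/\sqrt{t}$ into the bound of Lemma~\ref{lemma:regret}, use $k_{s,t}\le\bar{k}$, $\mathbf{g}_{s,t}\mathbf{1}_N\ge N\underline{g}$, and $\sum_{t=1}^T t^{-1/2}\le 2\sqrt{T}$ to conclude $R(T)=O(\sqrt{T})$, whence $R(T)/T\to 0$. The paper omits the proof for space and defers to~\cite{report}, but the derivation there follows precisely this route, and your attention to the uniform boundedness of the coefficients (in particular the lower bound on the denominator) is the only point requiring care.
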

\begin{proof}
We omit the proof due to space constraints. See \cite{report} for detailed proof.
\end{proof}

\subsection{Incentive Mechanism Design with Privacy Guarantees}

In this subsection, we give the differentially private incentive price $p_{s,t}$ under the one-way communication setting.


To achieve the privacy guarantee, we perturb the incentive price returned by Algorithm \ref{algo:incentive} as follows:
\begin{equation}\label{eq:privacy price}
    p_{s,t} = p^*_{s,t}+\delta_t,~\forall s,t
\end{equation}
where $\delta_t\sim\mathcal{L}\left(\frac{\Delta p}{\epsilon}\right)$ is a random variable that follows Laplace distribution with scale $\Delta p/\epsilon$, and $\Delta p$ is the maximum difference of the incentive price under two set of observations that differ in one passenger, which can be obtained by solving
\begin{align*}
    \max_{s,t,\mathbf{q}_{s,t},\mathbf{q}'_{s,t}} &p_{s,t+1}-p'_{s,t+1}\\
    \st \quad&\|\mathbf{q}_{s,t}-\mathbf{q}'_{s,t}\|_1=1
\end{align*}
where $p_{s,t+1}$ and $p'_{s,t+1}$ are incentive prices returned by Algorithm \ref{algo:incentive} given traffic offloads $\mathbf{q}_{s,t}$ and $\mathbf{q}'_{s,t}$, respectively.

In the sequel, we show differential privacy is preserved.
\begin{theorem}
Incentive price design \eqref{eq:privacy price} achieves $\left(\left(T-\sum_{t=1}^{T-1}\eta_t\right)\epsilon\right)$-differential privacy.
\end{theorem}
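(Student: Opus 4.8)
The plan is to build the privacy guarantee out of two standard differential-privacy facts: the Laplace mechanism, and the composition theorem for a sequence of differentially private releases. At each time step $t$ the government releases $p_{s,t}=p^\ast_{s,t}+\delta_t$ with $\delta_t\sim\mathcal{L}(\Delta p/\epsilon)$; since $\Delta p$ is, by construction, the $\ell_1$-sensitivity of the map from the observed traffic-offload vector to the price returned by Algorithm~\ref{algo:incentive} under inputs differing in one passenger, the single release $p_{s,t}$ is $\epsilon$-differentially private by the classical Laplace-mechanism argument. The subtlety is that the update rule $p_{s,t+1}=\max\{p_{s,t}-\sum_i\eta_t C'_{i,s}(q^\ast_{i,s}),0\}$ makes the price at time $t+1$ depend on the (already perturbed) price at time $t$, so the releases are not independent; one must track how the sensitivity of the cumulative release propagates through the recursion.

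First I would fix two neighboring datasets, i.e. two sequences of observed traffic-offload responses that differ in a single passenger, and argue by induction on $t$ that the contribution of time step $t$ to the overall privacy loss is controlled. The key quantitative step is that the gradient-descent update contracts (or at least does not expand by more than a controlled amount) the influence of a one-passenger perturbation: a perturbation entering the observation at time $t$ affects $p_{s,t+1}$ through the $-\eta_t\sum_i C'_{i,s}(q^\ast_{i,s})$ term, and the learning rate $\eta_t$ discounts this contribution. Carrying this through the telescoped recursion, the effective sensitivity accumulated over the horizon is $T-\sum_{t=1}^{T-1}\eta_t$ rather than the naive $T$, because each of the $T-1$ update steps shaves off a factor proportional to $\eta_t$ from the worst-case propagated deviation. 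Then by the sequential composition theorem for differential privacy (the privacy parameters of a composed sequence of mechanisms add), releasing the whole price trajectory $\{p_{s,t}\}_{t=1}^T$ is $\bigl((T-\sum_{t=1}^{T-1}\eta_t)\epsilon\bigr)$-differentially private, which is exactly the claim.

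Concretely, the steps in order are: (i) recall/state the Laplace mechanism guarantee and verify that $\Delta p$ as defined is indeed the relevant $\ell_1$-sensitivity for a single price release; (ii) write the recursion for the difference $|p_{s,t}-p'_{s,t}|$ between the price trajectories under the two neighboring inputs, using $1$-Lipschitzness of $\max\{\cdot,0\}$ and the update rule, and bound the per-step increment by something of order $\eta_t\Delta p$ so that the cumulative deviation after $T$ steps is $(T-\sum_{t=1}^{T-1}\eta_t)\Delta p$ in the worst case; (iii) invoke sequential composition to sum the per-step $\epsilon$'s, weighted by the factors extracted in step (ii), to obtain the stated privacy level.

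The main obstacle I anticipate is step (ii): making the bookkeeping rigorous for how a one-passenger change in the observed offloads propagates through the coupled, clipped gradient-descent recursion, and in particular showing that the discount is exactly $\eta_t$ per step so that the horizon factor collapses to $T-\sum_{t=1}^{T-1}\eta_t$. This requires care because the $\max\{\cdot,0\}$ clipping is only non-expansive (not contractive), the gradients $C'_{i,s}$ are evaluated at the passengers' best responses which themselves shift when the price shifts, and one must be sure the sensitivity $\Delta p$ used in the noise scale is consistent with the per-step sensitivity used in the composition accounting. Everything else — the Laplace tail bound and additivity of $\epsilon$ under composition — is routine once that propagation estimate is in place.
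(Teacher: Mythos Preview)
Your high-level architecture matches the paper's: both arguments proceed by induction over the time steps, compute the likelihood ratio of the released price at step $t$ under two neighboring inputs via the Laplace density, and then compose across the horizon to obtain the stated cumulative privacy parameter. The paper does this by directly manipulating the ratio $\prod_\tau Pr(p_{s,\tau}\mid p_{s,\tau-1})/Pr(p'_{s,\tau}\mid p'_{s,\tau-1})$ rather than invoking a black-box composition theorem, but that is a cosmetic difference.

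Where your proposal is incomplete is precisely the obstacle you flag in step (ii). You treat ``showing that the discount is exactly $\eta_t$ per step'' as an open difficulty and worry about the feedback through the best-response map. The paper resolves this in one line using Lemma~\ref{lemma:passenger response}: because each rational passenger best-responds with $q_{i,s}(p_{s,t})=[C_{i,s}^{\prime\,-1}(p_{s,t})]^+$, one has $C'_{i,s}(q_{i,s}(p_{s,t}))=p_{s,t}$ whenever the passenger participates. Substituting this identity into the gradient-descent update collapses the difference $\eta_t\bigl[C'_{i,s}(q_{i,s}(p'_{s,t}))-C'_{i,s}(q_{i,s}(p_{s,t}))\bigr]$ to $\eta_t(p'_{s,t}-p_{s,t})$, which combines with the leading $p'_{s,t}-p_{s,t}$ to give the per-step factor that telescopes to $T-\sum_{t=1}^{T-1}\eta_t$. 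In other words, the best-response structure is not an obstacle but the mechanism by which the exact discount emerges; once you invoke Lemma~\ref{lemma:passenger response} your step (ii) goes through, and steps (i) and (iii) are, as you say, routine.
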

\begin{proof}
We prove by induction. We first prove that differential privacy holds for single time step. Then we generalize the analysis on one time step scenario to multiple time steps scenario. Since the passengers' utility function is deterministic, given an incentive price $p_{s,t}$, passengers' participation is deterministic. Given the initial incentive price $p_{s,1}$ at $t=1$, the contribution of passenger $i$ is determined as $q_{i,s}(p_{s,1})$. 
We compare the p.d.f.'s at $p_{s,2}=p'_{s,2}$.
\begin{align*}
    &\frac{P\left(p_{s,2}\right)}{P'\left(p'_{s,2}\right)}\\
    =~&\frac{\exp\left(-\frac{\epsilon|p_{s,1}-\sum_j\eta_1C_{j,s}^{\prime}(q_{j,s}(p_{s,1}))-p_{s,2}|}{\Delta p}\right)}{\exp\left(-\frac{\epsilon|p'_{s,1}-\sum_j\eta_1C_{j,s}^{\prime}(q_{j,s}(p'_{s,1}))-p_{s,2}|}{\Delta p}\right)}\\
    \leq~& \exp\Big(-\epsilon|p_{s,1}-\sum_j\eta_1C_{i,s}^{\prime}(q_{i,s}(p_{s,1}))+\\
    &\quad\left(p'_{s,1}-\sum_j\eta_1C_{j,s}^{\prime}(q_{j,s}(p'_{s,1}))\right)|/\Delta p\Big)\\
    =~&\exp\bigg(\epsilon\Big\{|p'_{s,1}-p_{s,1}+\eta_1\big[C'_{i,s}(q_{i,s}(p'_{s,1}))\\
    &\quad\quad\quad\quad-C'_{i,s}(q_{i,s}(p_{s,1}))\big]|\Big\}/\Delta p\bigg)\\
    =~&\exp((1-\eta_1)\epsilon),
\end{align*}
where the inequality follows from triangle inequality, and the last equality follows by Lemma \ref{lemma:passenger response}. Thus we have $(1-\eta_1)\epsilon$-differential privacy.

We note that since the scheme follows Stackelberg setting, a malicious party can only infer the passengers' behavior at time $t=1$ by observing $p_{s,2}$. Thus, the analysis on single time step serves as our induction base.

At time $t$, the information perceived by the malicious party is $\mathcal{I}_{t}^{mal}=\{p_{s,t'}|\forall s,t=1,\cdots,t\}$. We analyze the ratio of $\frac{P\left(p_{s,t}\right)}{P'\left(p'_{s,t}\right)}$ under the under the following scenarios. First, if $p_{s,t'}=p'_{s,t'}$ for all $t'<t$ and $p_{s,t}$ distinguishes from $p_{s,t}$, then we have $(1-\eta_t)$-differential privacy. In the following, we focus on the general setting in which $p_{s,t'}$ differs from $p'_{s,t'}$ for all $t'<t$ such that $\mathbf{q}_{s,1:t}$ and $\mathbf{q}'_{s,1:t}$ differ at at most one entry. Then we have

\begin{align*}
    &\frac{P\left(p_{s,t}\right)}{P'\left(p'_{s,t}\right)}
    =\prod_{\tau=2}^t\left(\frac{Pr\left(p_{s,\tau}|p_{s,\tau-1}\right)}{Pr\left(p'_{s,\tau}|p'_{s,\tau-1}\right)}\right)\\
    =~&\prod_{\tau=1}^t\bigg\{\exp\Big[\epsilon\big[|p_{s,\tau}-p'_{s,\tau}+\eta_\tau\big(C'_{i,s}(q_{i,s}(p'_{s,\tau}))\\
    &\quad\quad\quad-C'_{i,s}(q_{i,s}(p_{s,\tau})\big)\big]|/\Delta p\Big]\bigg\}\\
    =~&\exp\left(\left(T-\sum_{\tau=1}^t\eta_\tau\right)\epsilon\right).
\end{align*}

Therefore, we have that the proposed approach achieves $\left((T-\sum_{t=1}^{T-1}\eta_t)\epsilon\right)$-differential privacy.
\end{proof}

In the remainder of this section, we characterize the social welfare using the incentive design \eqref{eq:privacy price}. We start with the expected regret defined as the probabilistic counter-part of \eqref{eq:regret}:
$
    \mathbb{E}\{R(T)\}=\mathbb{E}_{\mathbf{p}}\{\Lambda(\mathbf{p})\}-\Lambda^*,
$
where $\mathbb{E}_{\mathbf{p}}\{\cdot\}$ represents expectation with respect to $\mathbf{p}$.
\begin{lemma}
The expected regret $\mathbb{E}\{R(T)\}$ under incentive \eqref{eq:privacy price} is bounded from above as
\begin{equation}\label{eq:expected regret bound}
    \mathbb{E}\{R(T)\}\leq \mathbb{E}\left\{\sum_s\left[\frac{\Bar{p}^2k_{s,T}}{2\eta_T\mathbf{g}_{s,T}\mathbf{1}_N}+\sum_{t=1}^T\frac{\eta_t\Bar{g}^2N^2k_{s,t}}{2\mathbf{g}_{s,t}\mathbf{1}_N}\right]\right\}.
\end{equation}
\end{lemma}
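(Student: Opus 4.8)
The plan is to carry over the pathwise estimate from the proof of Lemma~\ref{lemma:regret} conditioned on a fixed draw of the Laplace perturbations $\{\delta_1,\dots,\delta_T\}$, and then take the outer expectation, using that each $\delta_t$ is independent of everything determined before time $t$ and has zero mean. First I would note that the convexity step in Lemma~\ref{lemma:regret} --- the first-order lower bound on $\sum_i C_{i,s}(\cdot)$ and the concave Taylor expansion of $q_{i,s}(\cdot)$ --- is purely pointwise in the realized price, so regardless of how $p_{s,t}$ was produced we still obtain
\[
\sum_i C_{i,s}(q_{i,s}(p_{s,t})) + \beta_s\Big[Q_{s,t}-\textstyle\sum_i q_{i,s}(p_{s,t})\Big]^+ - \sum_i C_{i,s}(q_{i,s}(p^*_s)) - \beta_s\Big[Q_{s,t}-\textstyle\sum_i q_{i,s}(p^*_s)\Big]^+ \le k_{s,t}\,(p^*_s - p_{s,t}),
\]
where now $k_{s,t}$, $\mathbf{g}_{s,t}$, $\mathbf{h}_{s,t}$ are the random quantities evaluated along the perturbed trajectory.

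Next I would redo the descent inequality with the noise present. Writing the broadcast price as $p_{s,t+1} = \big[p_{s,t}-\eta_t\mathbf{g}_{s,t}\mathbf{1}_N\big]^+ + \delta_{t+1}$ and using $p^*_s\ge 0$ together with non-expansiveness of $[\cdot]^+$ toward nonnegative points, expanding $(p_{s,t+1}-p^*_s)^2$ gives
\[
(p_{s,t+1}-p^*_s)^2 \le (p_{s,t}-p^*_s)^2 - 2\eta_t\mathbf{g}_{s,t}\mathbf{1}_N\,(p_{s,t}-p^*_s) + \eta_t^2\Bar{g}^2N^2 + 2\delta_{t+1}\Big(\big[p_{s,t}-\eta_t\mathbf{g}_{s,t}\mathbf{1}_N\big]^+-p^*_s\Big) + \delta_{t+1}^2 .
\]
Solving for $\mathbf{g}_{s,t}\mathbf{1}_N(p^*_s-p_{s,t})$, multiplying by $k_{s,t}/(\mathbf{g}_{s,t}\mathbf{1}_N)$, summing over $s$ and $t$, and telescoping exactly as in the proof of Lemma~\ref{lemma:regret} bounds $R(T)$ by the deterministic expression in \eqref{eq:regret bound} (with the random coefficients) plus a sum of correction terms built from $\delta_{t+1}$ and $\delta_{t+1}^2$. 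Taking expectation and conditioning on the information available through time $t$, each linear-in-$\delta_{t+1}$ term vanishes because $\delta_{t+1}$ is zero-mean and independent of $p_{s,t}$, $k_{s,t}$, and $\mathbf{g}_{s,t}$; absorbing the remaining nonnegative second-moment contributions into the (now random) right-hand side yields \eqref{eq:expected regret bound}.

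The step I expect to be the main obstacle is controlling the interaction between the projection $[\cdot]^+$ and the additive noise: once $\delta_{t+1}$ is injected the iterate may leave and re-enter $[0,\infty)$, so the ``projection does not increase distance to $p^*_s\ge 0$'' argument must be applied to the pre-noise quantity and the noise shift treated as a separate term --- which is precisely what produces the $2\delta_{t+1}(\cdot)$ cross term that is then killed in expectation. A related subtlety worth pinning down is that the feedback used to form $\mathbf{g}_{s,t}$ is the response to the \emph{broadcast} price $p_{s,t}$ the passengers actually see, so that $\mathbf{g}_{s,t}$ and the participation levels $q_{i,s}(p_{s,t})$ are mutually consistent and $\delta_{t+1}$ is genuinely independent of all quantities carrying a time-$t$ index, which is what makes the conditioning argument legitimate.
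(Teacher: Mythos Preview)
The paper's own proof is a single sentence: ``The proof is the probabilistic counter-part of that of Lemma~\ref{lemma:regret}. We omit the proof due to space constraints.'' Your proposal is precisely a fleshed-out version of what that sentence must mean---you replay the convexity and descent steps of Lemma~\ref{lemma:regret} along the perturbed trajectory and then average---so at the level the paper argues, your approach is the same as the paper's.

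That said, you go further than the paper does, and in doing so you surface an issue the paper simply glosses over. Your expansion of $(p_{s,t+1}-p^*_s)^2$ correctly produces the cross term $2\delta_{t+1}(\cdot)$, which you kill in expectation by the independence and zero-mean property of the Laplace noise, and the quadratic term $\delta_{t+1}^2$. Your sentence ``absorbing the remaining nonnegative second-moment contributions into the (now random) right-hand side'' is the weak point: the stated bound \eqref{eq:expected regret bound} is literally the expectation of the deterministic bound \eqref{eq:regret bound}, with no explicit noise-variance term, and there is no obvious slot on that right-hand side into which $\sum_t \mathbb{E}[\delta_{t+1}^2]\cdot k_{s,t}/(2\eta_t\mathbf{g}_{s,t}\mathbf{1}_N)$ can be folded without altering the constants. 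Since the paper supplies no details, it is not clear the paper handles this any better; but if you want a rigorous derivation of \eqref{eq:expected regret bound} as written, you should either (i) argue that the $\delta_{t+1}^2$ contribution is dominated by one of the existing summands (which would require an assumption linking $\Delta p/\epsilon$ to $\eta_t$, $\bar g$, $N$), or (ii) acknowledge an additive $O\!\big(\sum_t \mathrm{Var}(\delta_t)/\eta_t\big)$ term and note that it does not affect the downstream Hannan-consistency conclusion. Your discussion of the projection/noise ordering is on point and is exactly the care the one-line ``probabilistic counterpart'' hides.
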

\begin{proof}
The proof is the probabilistic counter-part of that of Lemma \ref{lemma:regret}. We omit the proof due to space constraints.
\end{proof}

Before closing this section, we finally show Hannan consistency holds under incentive design \eqref{eq:privacy price}, i.e., 
\begin{equation}\label{eq:Hannan consistency}
    \limsup_{T\rightarrow\infty}\frac{R(T)}{T}=0\mbox{ with probability one.}
\end{equation}

\begin{theorem}\label{thm:Hannan consistency}
The Hannan consistency \eqref{eq:Hannan consistency} holds for incentive design \eqref{eq:privacy price}.
\end{theorem}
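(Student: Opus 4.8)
The plan is to show the average regret vanishes almost surely by combining the expected-regret bound from the preceding lemma with a concentration argument that controls the random perturbations $\delta_t$. First I would instantiate the learning rate $\eta_t=1/\sqrt{t}$, exactly as in the non-private Proposition, and observe that the deterministic envelope on the right-hand side of \eqref{eq:expected regret bound} is, term by term, the same expression analyzed in Lemma \ref{lemma:regret}. Using $\sum_{t=1}^T \eta_t = \sum_{t=1}^T 1/\sqrt{t} = O(\sqrt{T})$ and $1/\eta_T = \sqrt{T}$, together with the uniform bounds $\bar g, \underline{g}, \bar k, \bar p$ on the gradients, offload-sensitivities, and prices (all finite by the standing assumptions that $C_{i,s}$ is continuously differentiable, strictly increasing, convex and that $q_{i,s}(\cdot)$ is increasing and the government caps the price at $\bar p$), the envelope is $O(\sqrt{T})$. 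Hence $\mathbb{E}\{R(T)\}/T = O(1/\sqrt{T}) \to 0$, which already gives Hannan consistency in expectation.

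To upgrade this to the almost-sure statement \eqref{eq:Hannan consistency}, I would next quantify how the Laplace noise $\delta_t \sim \mathcal{L}(\Delta p/\epsilon)$ injected in \eqref{eq:privacy price} propagates into the realized regret. The key observation is that each $\delta_t$ shifts the issued price by at most $|\delta_t|$, and by convexity/Lipschitz-type control (the same first-order inequalities used in the proof of Lemma \ref{lemma:regret}, now applied pointwise to the realized prices rather than in expectation) the induced change in the per-step social cost is bounded by $\bar k\,|\delta_t|$ up to constants. Therefore $R(T) \le R_{\text{det}}(T) + c\sum_{t=1}^T |\delta_t|$ for the deterministic $O(\sqrt{T})$ term $R_{\text{det}}(T)$ and a constant $c$ depending only on $\bar g, \bar k, \beta_s, N$. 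Since the $|\delta_t|$ are i.i.d.\ with finite mean $\Delta p/\epsilon$ and finite variance, the strong law of large numbers gives $\frac{1}{T}\sum_{t=1}^T|\delta_t| \to \Delta p/\epsilon$ almost surely, so $\frac{1}{T}\sum_{t=1}^T|\delta_t|$ is $O(1)$ a.s.; to kill this residual term I would instead let the noise scale decay, e.g.\ replace $\delta_t$ by $\delta_t/t^{\gamma}$ for small $\gamma>0$ (or use a Borel--Cantelli argument on the event $\{|\delta_t| > t^{1/2}\}$, whose probabilities are summable because the Laplace tail is exponential), yielding $\frac{1}{T}\sum_{t=1}^T |\delta_t| \to 0$ a.s. Combining, $\limsup_{T\to\infty} R(T)/T \le \lim_{T\to\infty}\big(R_{\text{det}}(T)/T + c\,\tfrac1T\sum_t|\delta_t|\big) = 0$ with probability one, which is \eqref{eq:Hannan consistency}.

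The main obstacle I anticipate is making the propagation step rigorous: showing that a price perturbation of size $|\delta_t|$ changes the realized social cost by only $O(|\delta_t|)$ requires the map $p \mapsto \sum_i C_{i,s}(q_{i,s}(p)) + \beta_s[Q_{s,t}-\sum_i q_{i,s}(p)]^+$ to be Lipschitz on the operating range $[0,\bar p]$, which in turn needs a uniform bound on $C'_{i,s}(q_{i,s}(p))\,q'_{i,s}(p)$ — i.e.\ the quantity $\bar k$ must genuinely be finite, and one must be careful at the kink of $[\cdot]^+$. A secondary subtlety is that the noise at earlier times feeds back through the gradient-descent update into later prices, so the errors do not simply add independently; handling this cleanly calls for unrolling the recursion as in the differential-privacy proof above and bounding the accumulated deviation, after which the i.i.d.\ SLLN (or Borel--Cantelli) argument applies to the aggregate. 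Once the Lipschitz propagation bound is in hand, the remaining steps are routine.
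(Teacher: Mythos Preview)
There is a genuine gap in your argument, and it sits exactly where you yourself flag the difficulty. After writing $R(T)\le R_{\text{det}}(T)+c\sum_{t=1}^T|\delta_t|$, you correctly observe that the strong law gives $\tfrac{1}{T}\sum_{t}|\delta_t|\to \Delta p/\epsilon>0$ almost surely. Neither of your two proposed fixes removes this residual. Replacing $\delta_t$ by $\delta_t/t^{\gamma}$ changes the mechanism in~\eqref{eq:privacy price}; you would then be proving Hannan consistency for a \emph{different} (decaying-noise) scheme, not the one in the theorem. The Borel--Cantelli route is simply wrong as stated: summability of $Pr(|\delta_t|>\sqrt{t})$ only yields $|\delta_t|\le\sqrt{t}$ eventually, hence the crude envelope $\tfrac{1}{T}\sum_{t\le T}\sqrt{t}=O(\sqrt{T})\to\infty$; and in any case the SLLN already pins the limit of $\tfrac{1}{T}\sum_t|\delta_t|$ at $\Delta p/\epsilon$, so no tail argument can drive it to zero. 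The upshot is that a Lipschitz bound in $|\delta_t|$ is too coarse---convexity of the per-step cost in the price means that $\mathbb{E}[\text{cost}(p+\delta_t)]\ge \text{cost}(p)$, so the first-order noise contribution does not cancel in this decomposition.

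For comparison, the paper does not split into ``deterministic trajectory plus noise.'' It re-runs the convexity inequality from Lemma~\ref{lemma:regret} at the \emph{realized} prices to obtain the per-step bound $\bar{k}(p_s^*-p_{s,t})$, aggregates to $R(T)/T\le \tfrac{S\bar{k}}{T}\sum_t\max_s\|p_s^*-p_{s,t}\|_\infty$, invokes Lemma~\ref{lemma:limsup and probability} to pass the $\limsup$ inside the probability, and then appeals directly to the Laplace tail of the issued prices to conclude that this average is $\le 0$ with probability one. In other words, the paper controls the distance of the realized prices to the optimal fixed price rather than the distance to a noiseless gradient-descent trajectory. If you want to salvage your approach, you would need to exploit the \emph{signed} noise (mean zero) rather than $|\delta_t|$, together with a second-order control, and you would have to track the feedback of past noise through the gradient update carefully; the absolute-value Lipschitz bound you propose cannot close the argument for the mechanism as written.
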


To prove Theorem \ref{thm:Hannan consistency}, we first give the following lemma.
\begin{lemma}\label{lemma:limsup and probability}
Let $Pr(\cdot)$ be the probability of an event. Then the following inequality holds
\begin{multline}\label{eq:limsup and probability}
    Pr\left(\limsup_{T\rightarrow\infty}\left\{\sum_{t=1}^TS\Bar{k}\max_{s}\|p^*_s-p_{s,t}\|_\infty/T\right\}\leq 0\right)\\
    \geq\limsup_{T\rightarrow\infty}Pr\left(\sum_{t=1}^TS\Bar{k}\max_{s}\|p^*_s-p_{s,t}\|_\infty/T\leq 0\right)
\end{multline}
\end{lemma}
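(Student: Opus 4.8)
The statement to prove is Lemma~\ref{lemma:limsup and probability}, which compares the probability of a $\limsup$-event with the $\limsup$ of the probabilities of the corresponding finite-$T$ events.

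\medskip

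The plan is to recognize this as an instance of the general measure-theoretic fact that, for any sequence of events $A_T$, one has $Pr\left(\limsup_T A_T\right) \geq \limsup_T Pr(A_T)$ when the $A_T$ are suitably nested, and more generally via Fatou's lemma applied to indicator functions. First I would set $A_T = \left\{\sum_{t=1}^T S\bar{k}\max_s\|p_s^*-p_{s,t}\|_\infty/T \leq 0\right\}$, so the right-hand side is $\limsup_T Pr(A_T) = \limsup_T \mathbb{E}[\mathbf{1}_{A_T}]$. The left-hand side involves the event $\left\{\limsup_T (\cdot/T) \leq 0\right\}$, which I would relate to $\limsup_T \mathbf{1}_{A_T}$: if $\limsup_T \mathbf{1}_{A_T}(\omega) = 1$, i.e.\ infinitely many of the $A_T$ occur for outcome $\omega$, I need to argue the sample path satisfies $\limsup_T(\cdot/T)\leq 0$. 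This is where one uses the specific structure: the quantity $\sum_{t=1}^T S\bar k \max_s\|p_s^*-p_{s,t}\|_\infty/T$ is a running average of nonnegative terms, so it is not automatically monotone, but the event $\{\limsup \leq 0\}$ together with nonnegativity forces the $\limsup$ to equal $0$; membership in infinitely many $A_T$ gives a subsequence along which the average is $\leq 0$, hence (by nonnegativity) $=0$ along that subsequence, so $\liminf$ of the averages is $0$. To pin down the $\limsup$ I would invoke the earlier regret bound (Lemma~\ref{lemma:regret} / the expected-regret bound \eqref{eq:expected regret bound}) which controls the growth of $\sum_{t=1}^T\|p_s^*-p_{s,t}\|$, ensuring the running average converges rather than oscillating, so $\{\limsup_T(\cdot/T)\leq 0\} \supseteq \limsup_T A_T$ up to a null set.

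\medskip

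Concretely, the key steps in order are: (i) write both sides in terms of indicator random variables $\mathbf{1}_{A_T}$; (ii) establish the set inclusion $\limsup_T A_T \subseteq \left\{\limsup_T\sum_{t=1}^T S\bar k\max_s\|p_s^*-p_{s,t}\|_\infty/T \leq 0\right\}$, using nonnegativity of the summands and the a.s.\ convergence of the Cesàro averages guaranteed by the regret bound; (iii) apply the reverse Fatou lemma (equivalently, Fatou for $1-\mathbf{1}_{A_T}$), $\mathbb{E}\left[\limsup_T \mathbf{1}_{A_T}\right] \geq \limsup_T \mathbb{E}\left[\mathbf{1}_{A_T}\right]$, valid since $0\leq \mathbf{1}_{A_T}\leq 1$ is dominated; (iv) combine: $Pr\left(\limsup_T A_T\right) = \mathbb{E}\left[\limsup_T\mathbf{1}_{A_T}\right]\geq \limsup_T Pr(A_T)$, and use the inclusion from (ii) to upgrade the left side to $Pr\left(\{\limsup_T(\cdot/T)\leq 0\}\right)$, yielding \eqref{eq:limsup and probability}.

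\medskip

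The main obstacle I anticipate is step~(ii): carefully justifying that being in infinitely many $A_T$ implies the full $\limsup$ of the running averages is $\leq 0$ rather than merely the $\liminf$. The running average $\sum_{t=1}^T c_t / T$ with $c_t = S\bar k\max_s\|p^*_s - p_{s,t}\|_\infty \geq 0$ can in principle have $\liminf = 0$ while $\limsup > 0$, so the naive subsequence argument only controls the $\liminf$. The fix is to lean on the Hannan-consistency-type bound already in hand — the regret bound \eqref{eq:expected regret bound} with $\eta_t = 1/\sqrt t$ shows $\sum_{t=1}^T c_t = O(\sqrt T)$ in expectation, hence $\sum_{t=1}^T c_t/T \to 0$ a.s.\ (after a Borel–Cantelli or martingale-convergence argument), so on a probability-one set the $\limsup$ and $\liminf$ of the averages coincide and equal $0$; intersecting with this null-complement makes the inclusion in (ii) hold almost surely, which is all that is needed since both sides of \eqref{eq:limsup and probability} are unchanged by null sets. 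If one prefers to avoid invoking the quantitative bound, an alternative is to state the lemma's inclusion directly as a consequence of $c_t\geq 0$ plus the definition, accepting the weaker-but-sufficient reading that the event on the left of \eqref{eq:limsup and probability} is exactly $\limsup_T A_T$ by convention; I would flag this and pick the cleaner of the two for the writeup.
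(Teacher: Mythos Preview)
The paper does not actually contain a proof of Lemma~\ref{lemma:limsup and probability}; it is omitted with a pointer to the technical report \cite{report}. So there is no in-paper argument to compare against, and I can only assess your proposal on its own terms.

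Your overall architecture --- reverse Fatou on the indicators $\mathbf{1}_{A_T}$ to get $Pr(\limsup_T A_T)\geq \limsup_T Pr(A_T)$, followed by the set inclusion $\limsup_T A_T\subseteq\{\limsup_T(\cdot/T)\leq 0\}$ --- is sound. However, the ``main obstacle'' you flag in step~(ii) is not an obstacle at all, and your proposed fix is both unnecessary and circular. Write $c_t=S\bar k\max_s\|p_s^*-p_{s,t}\|_\infty\geq 0$. If $\omega\in A_T$ then $\sum_{t=1}^T c_t(\omega)\leq 0$, and nonnegativity forces $c_1(\omega)=\cdots=c_T(\omega)=0$. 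Hence the events $A_T$ are \emph{decreasing} in $T$, and $\omega\in\limsup_T A_T=\bigcap_T A_T$ means $c_t(\omega)=0$ for every $t$, so the running average is identically zero and in particular $\limsup_T\sum_{t=1}^T c_t(\omega)/T=0\leq 0$. This gives the inclusion outright, with no appeal to regret bounds or Borel--Cantelli. In fact, because the $A_T$ are nested, reverse Fatou is already overkill: monotone continuity of measure gives $\limsup_T Pr(A_T)=\lim_T Pr(A_T)=Pr(\bigcap_T A_T)$ directly.

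Your suggested fix --- invoking the regret bound \eqref{eq:expected regret bound} or a Hannan-consistency-type estimate to force convergence of the Ces\`aro averages --- should be avoided. Lemma~\ref{lemma:limsup and probability} is itself a step in the proof of Theorem~\ref{thm:Hannan consistency} (Hannan consistency), so leaning on that machinery here risks circularity. Drop that detour; the nonnegativity observation above closes the argument cleanly.
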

\begin{proof}
We omit the proof due to space constraints. See \cite{report} for detailed proof.
\end{proof}

Now we are ready to prove Theorem \ref{thm:Hannan consistency}.
\begin{proof}
\emph{(Proof of Theorem \ref{thm:Hannan consistency}.)} Let $\Bar{k}=\max_{s,t}k_{s,t}$ be the maximum $k_{s,t}$ for all $s$ and $t$, and $\underline{g}=\min_{s,t}g_{s,t}$. Then following the proof of Lemma \ref{lemma:regret}, we have
\begin{align*}
    &\sum_iC_{i,s}(q_{i,s}(p_{s,t}))+\beta_s\left[Q_{s,t}-\sum_{i\in\mathcal{N}}q_{i,s}(p_{s,t})\right]^+\\
    &-\sum_iC_{i,s}(q_{i,s}(p^*_{s}))-\beta_s\left[Q_{s,t}-\sum_{i\in\mathcal{N}}q_{i,s}(p^*_{s}) \right]^+\\
    \leq~&k_{s,t}(p^*_s-p_{s,t})
    \leq\Bar{k}(p^*_s-p_{s,t}).
\end{align*}
Summing the inequality above over $t$ and $s$, we have
\begin{align*}
    &\Lambda(\mathbf{p})-\Lambda^*\\
    =~&\sum_t\sum_s\Bigg(\sum_iC_{i,s}(q_{i,s}(p_{s,t}))+\beta_s\left[Q_{s,t}-\sum_{i\in\mathcal{N}}q_{i,s}(p_{s,t})\right]^+\\
    &-\sum_iC_{i,s}(q_{i,s}(p^*_{s}))-\beta_s\left[Q_{s,t}-\sum_{i\in\mathcal{N}}q_{i,s}(p^*_{s}) \right]^+\Bigg)\\
    \leq~&\sum_t\sum_s\left(k_{s,t}(p^*_s-p_{s,t})\right)\\
    \leq~&\sum_t\sum_s\left(\Bar{k}(p^*_s-p_{s,t})\right)
    \leq\sum_t\sum_s\left(\Bar{k}\|p^*_s-p_{s,t}\|_\infty\right).
\end{align*}

Let $Pr(\cdot)$ be the probability of an event. Then we have

\begin{align}
    &Pr\left(\limsup_{T\rightarrow\infty}\frac{R(T)}{T}\leq0\right)\nonumber\\
    \geq~&Pr\left(\limsup_{T\rightarrow\infty}\left\{\sum_t\sum_s\left(\Bar{k}\|p^*_s-p_{s,t}\|_\infty\right)/T\right\}\leq0\right)\nonumber\\
    \geq~&Pr\left(\limsup_{T\rightarrow\infty}\left\{\sum_tS\Bar{k}\max_{s}\|p^*_s-p_{s,t}\|_\infty/T\right\}\leq 0\right)\nonumber\\
    \geq~&\limsup_{T\rightarrow\infty}\left\{Pr\left(\sum_t\frac{S\Bar{k}\max_{s}\|p^*_s-p_{s,t}\|_\infty}{T}\leq 0\right)\right\}\label{eq:thm4 ineq 1}\\
    \geq~&\limsup_{T\rightarrow\infty}\left\{1-Pr\left(\sum_t\frac{S\Bar{k}\max_{s}\|p^*_s-p_{s,t}\|_\infty}{T}\geq \frac{\Delta p}{\epsilon}\right)\right\}\label{eq:thm4 ineq 2}\\
    =~&\limsup_{T\rightarrow\infty}\left\{1-\exp\left(-\frac{T}{S\Bar{k}}\right)\right\}
    =1,\nonumber
\end{align}
where inequality \eqref{eq:thm4 ineq 1} holds by Lemma \ref{lemma:limsup and probability}, inequality \eqref{eq:thm4 ineq 2} holds by \eqref{eq:privacy price} and definition of $\delta_t$. Therefore, we have Hannan consistency holds.
\end{proof}

\section{Numerical Case Study}\label{sec:simulation}
\begin{figure*}[t!]
\centering
                 \begin{subfigure}{.65\columnwidth} \includegraphics[width=\columnwidth]{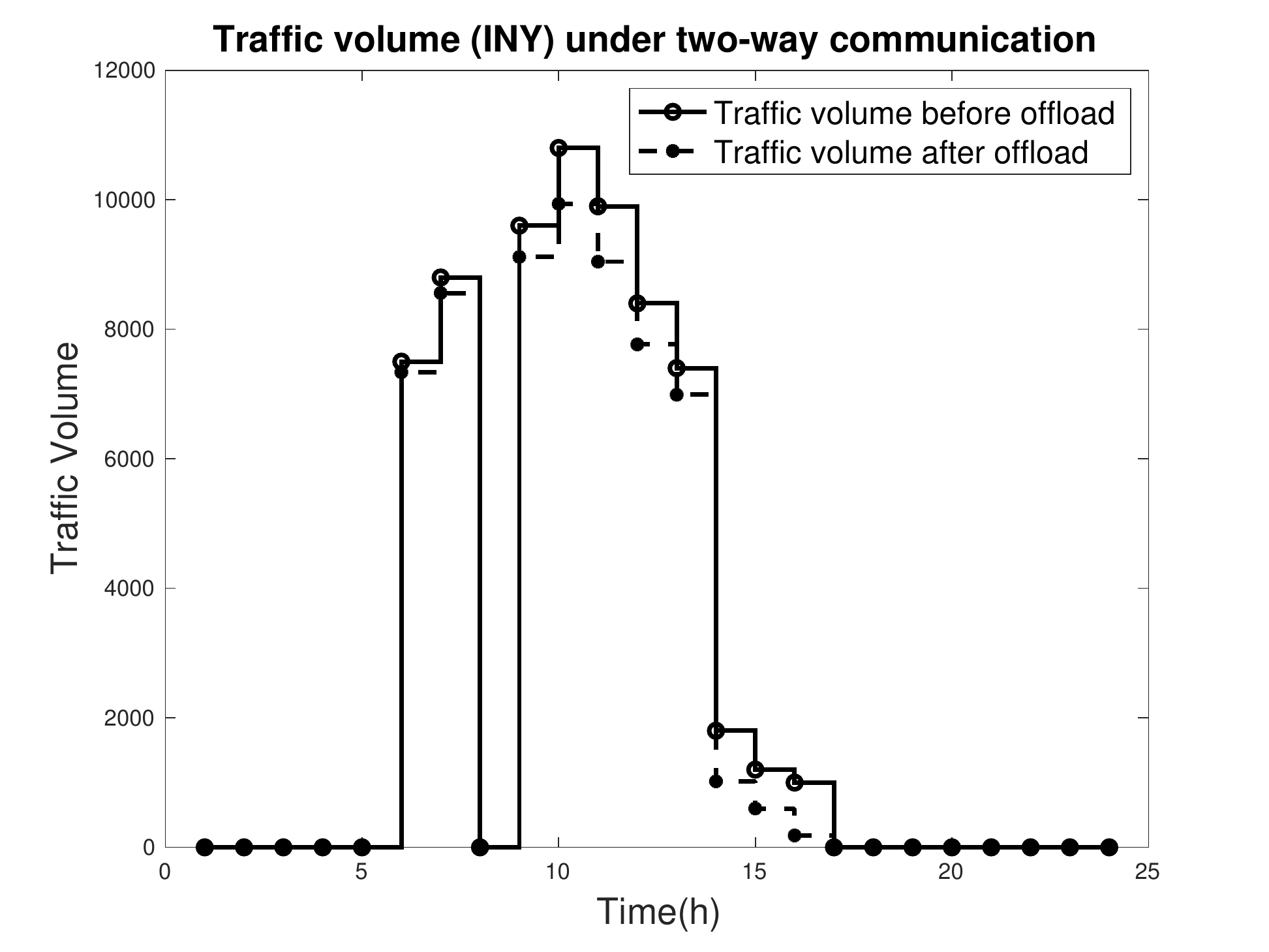}
                 \subcaption {}
                 \label{fig:two way iny}
                 \end{subfigure}\hfill
                 \begin{subfigure}{.65\columnwidth} \includegraphics[width=\columnwidth]{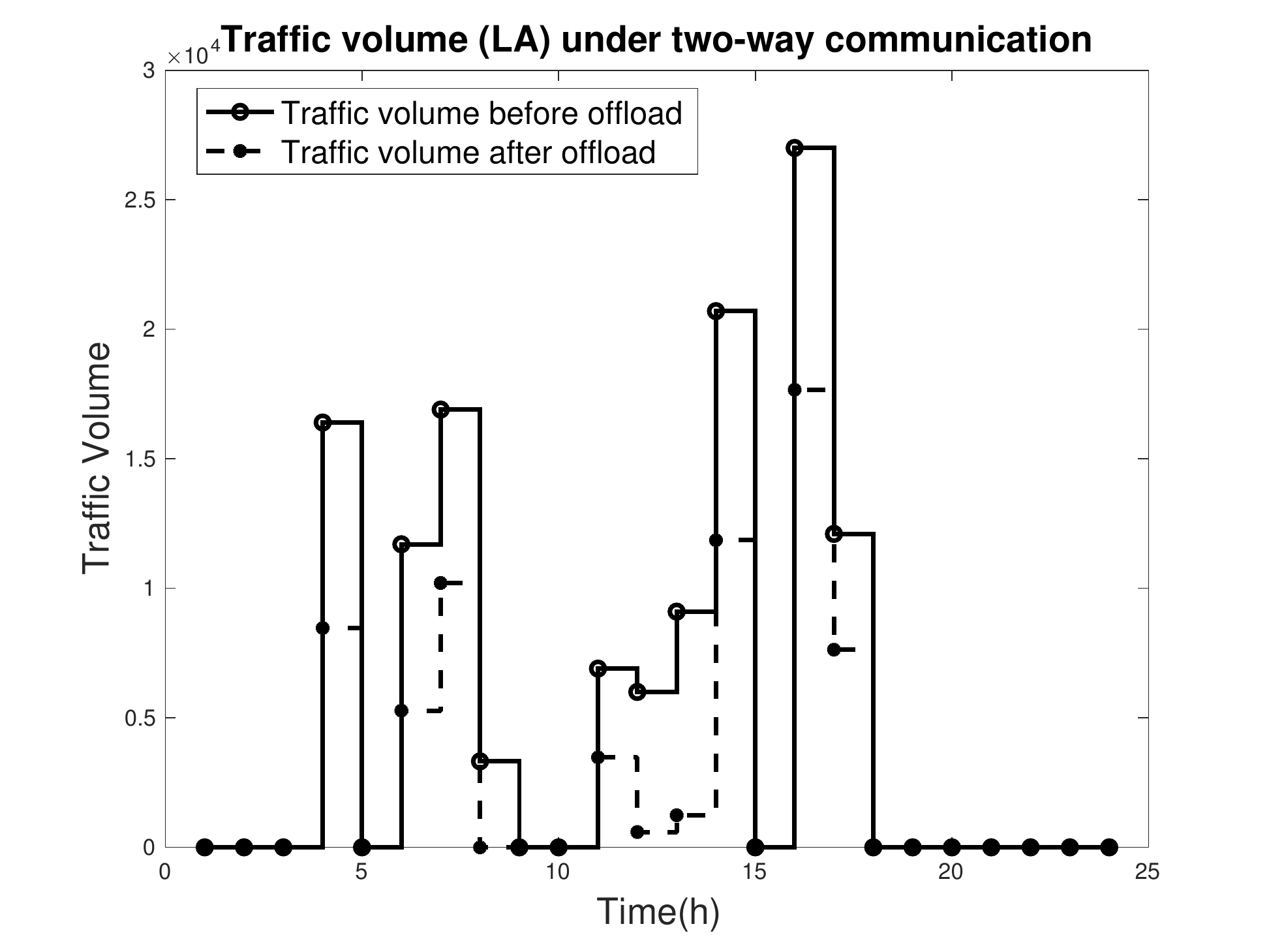}
                 \subcaption {}
                 \label{fig:two way la}
                 \end{subfigure}\hfill
                 \begin{subfigure}{.65\columnwidth} \includegraphics[width=\columnwidth]{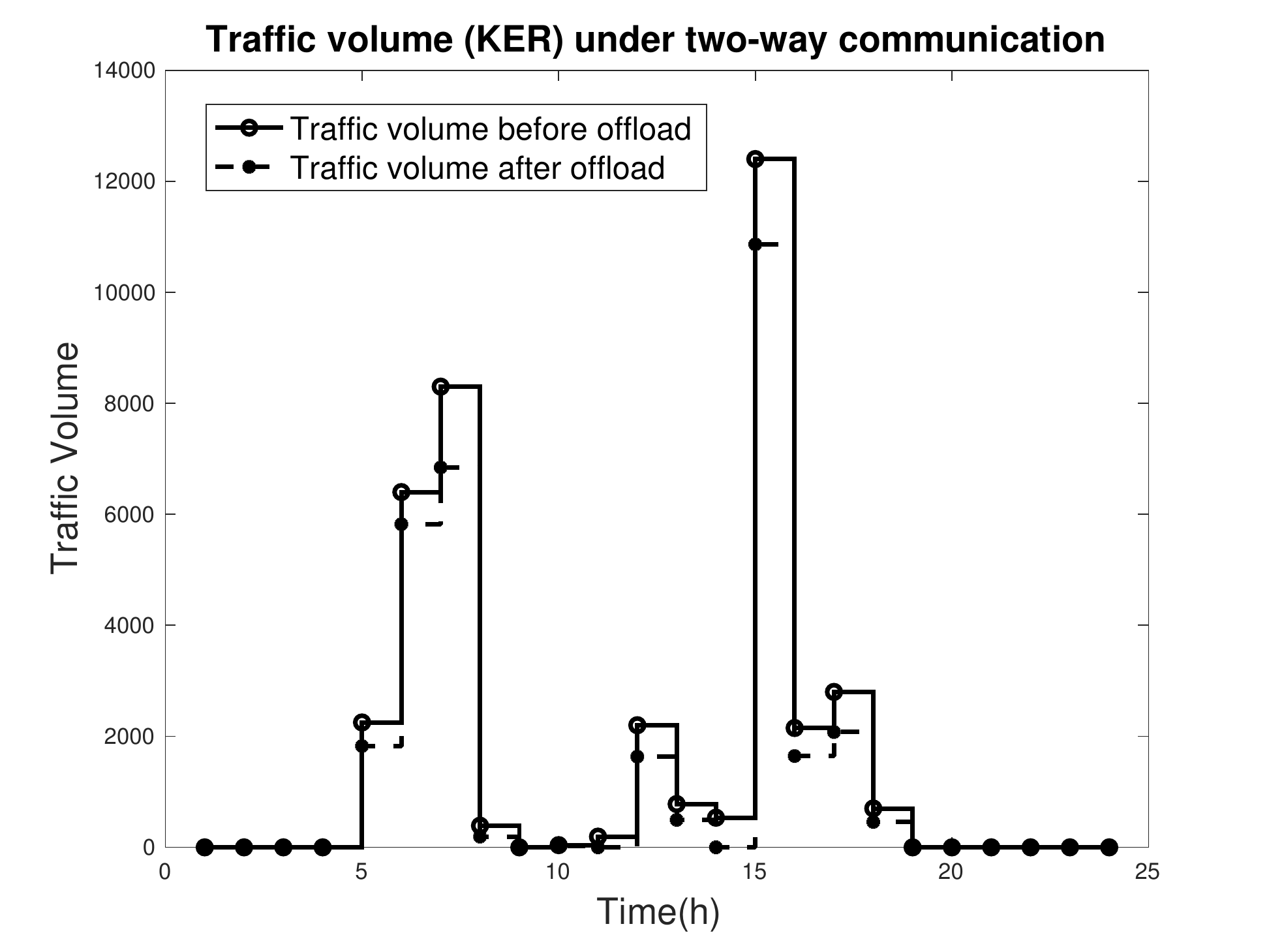}
                 \subcaption{}
                 \label{fig:two way ker}
                 \end{subfigure}\hfill
                 \begin{subfigure}{.65\columnwidth} \includegraphics[width=\columnwidth]{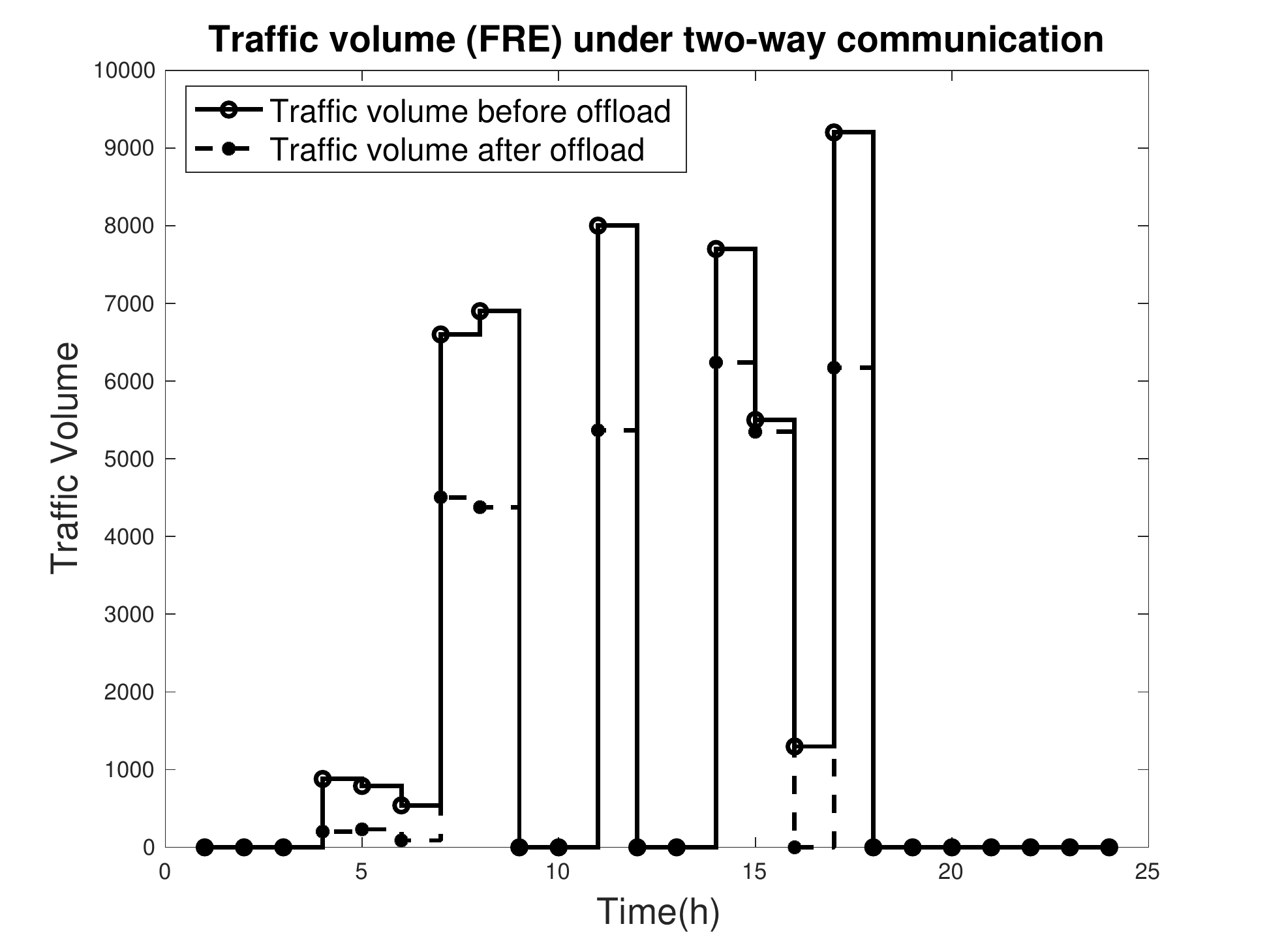}
                 \subcaption{}
                 \label{fig:two way fre}
                 \end{subfigure}\hfill
                 \begin{subfigure}{.65\columnwidth} \includegraphics[width=\columnwidth]{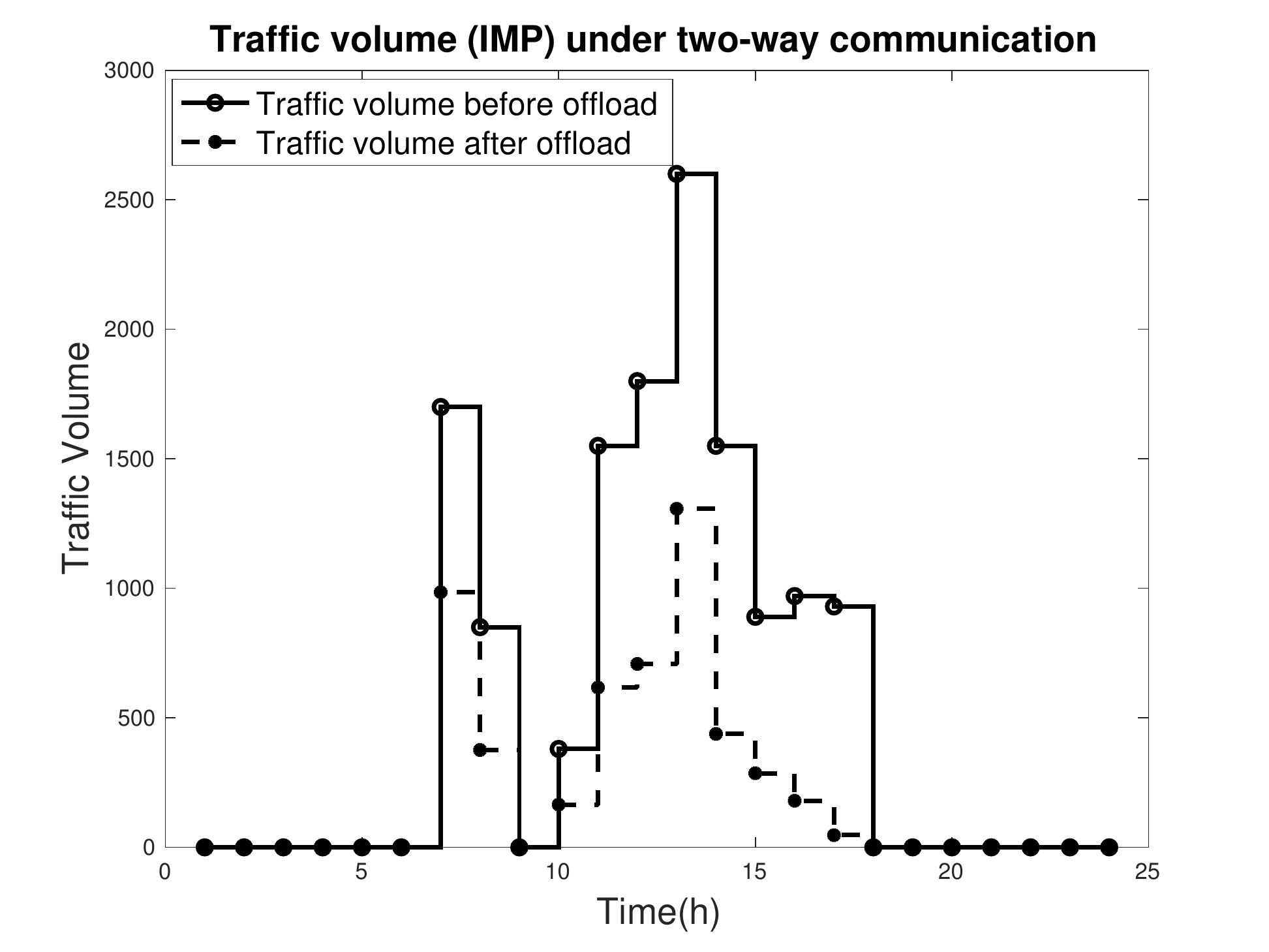}
                 \subcaption{}
                 \label{fig:two way imp}
                 \end{subfigure}\hfill
\caption{In Fig. \ref{fig:two way iny} to Fig. \ref{fig:two way imp}, we present the traffic volume before and after traffic offload. The solid curve is the traffic volume before traffic offload, whereas the dashed curve represents the traffic volume after traffic offload.}
\end{figure*}

\begin{figure*}[t!]
\centering
                 \begin{subfigure}{.65\columnwidth}
                 \includegraphics[width=\columnwidth]{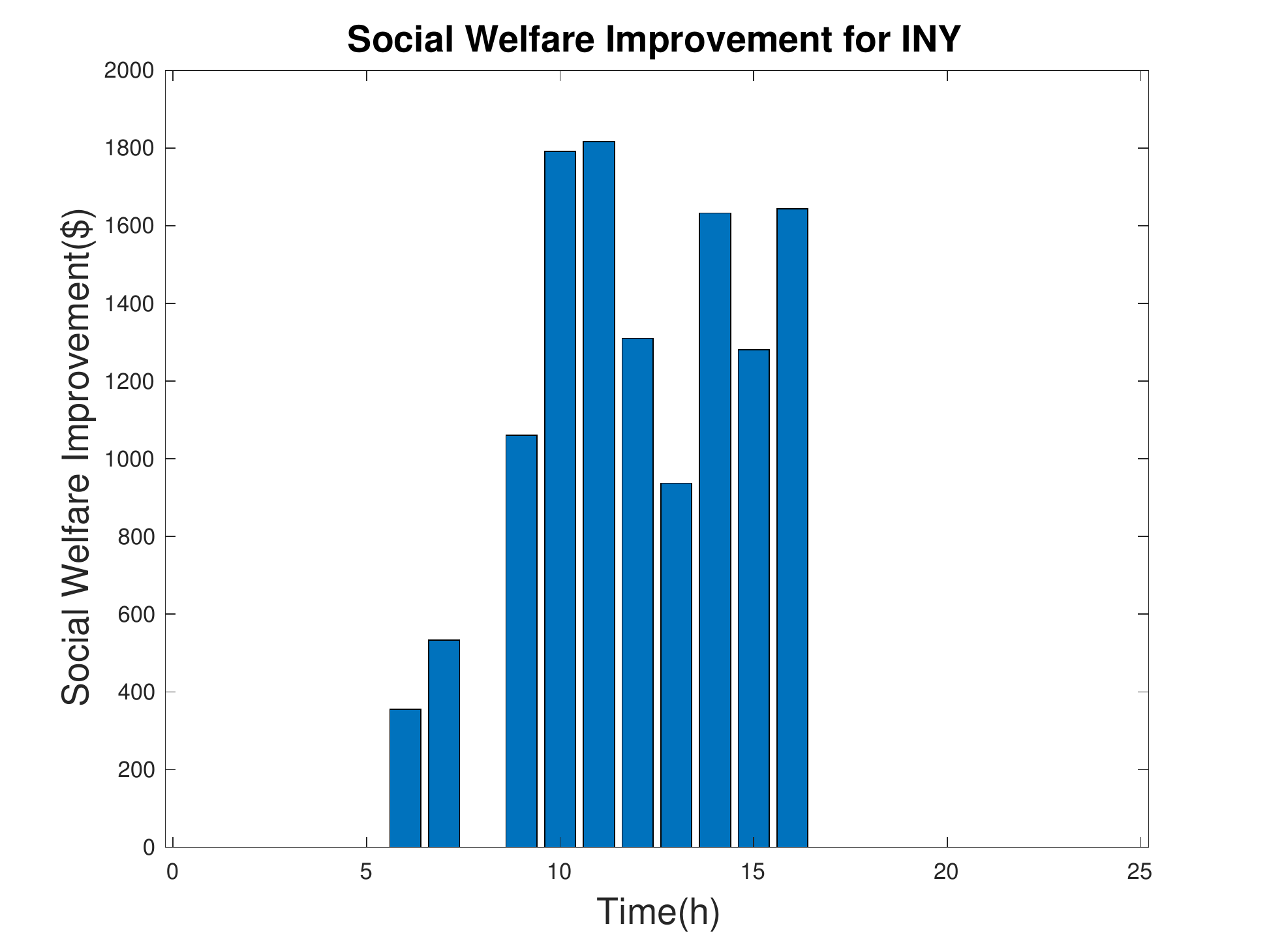}
                 \subcaption {}
                 \label{fig:two way iny sw}
                 \end{subfigure}\hfill
                 \begin{subfigure}{.65\columnwidth}
                 \includegraphics[width=\columnwidth]{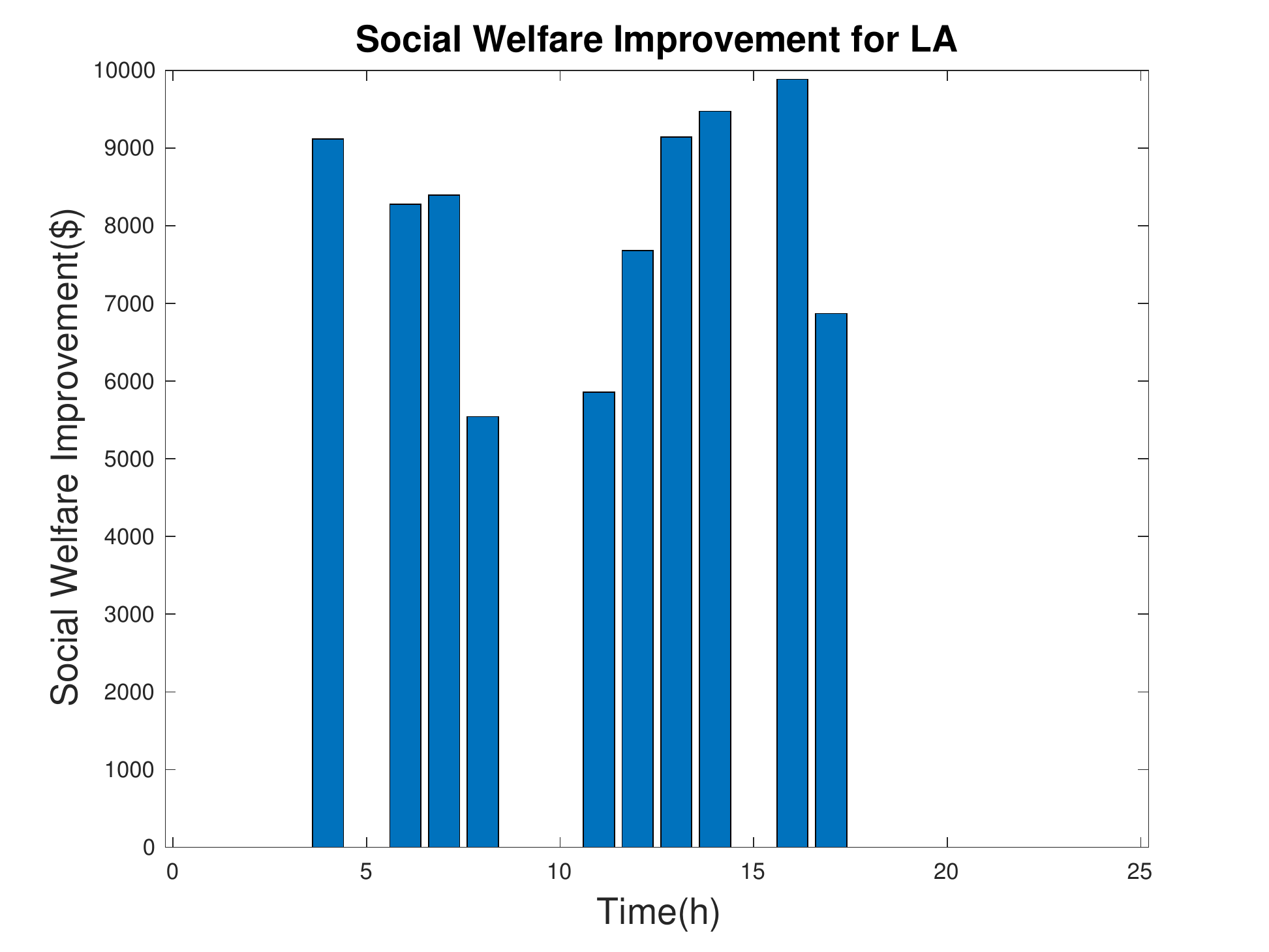}
                 \subcaption {}
                 \label{fig:two way la sw}
                 \end{subfigure}\hfill
                 \begin{subfigure}{.65\columnwidth}
                 \includegraphics[width=\columnwidth]{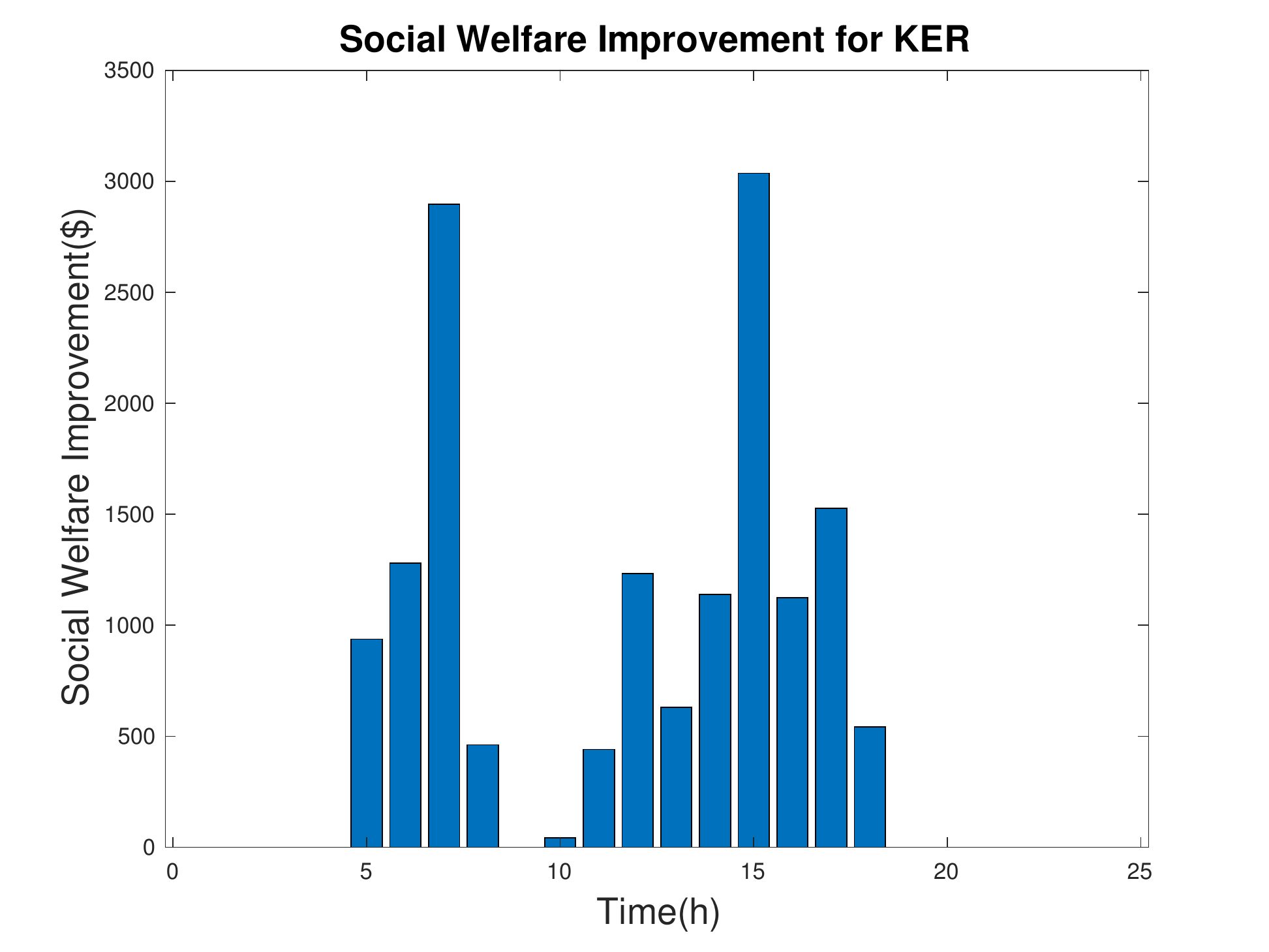}
                 \subcaption{}
                 \label{fig:two way ker sw}
                 \end{subfigure}\hfill
                 \begin{subfigure}{.65\columnwidth} \includegraphics[width=\columnwidth]{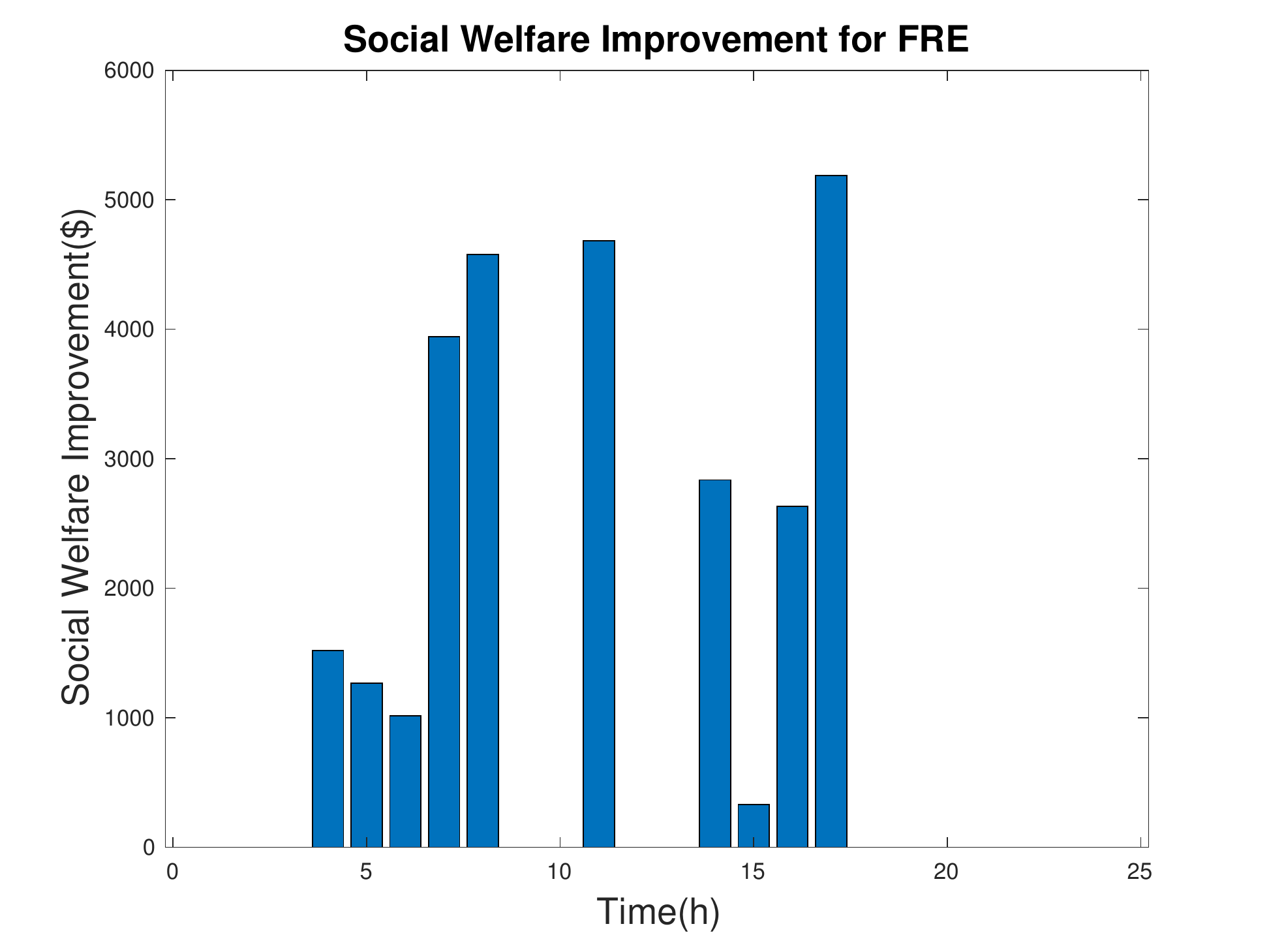}
                 \subcaption{}
                 \label{fig:two way fre sw}
                 \end{subfigure}\hfill
                 \begin{subfigure}{.65\columnwidth} \includegraphics[width=\columnwidth]{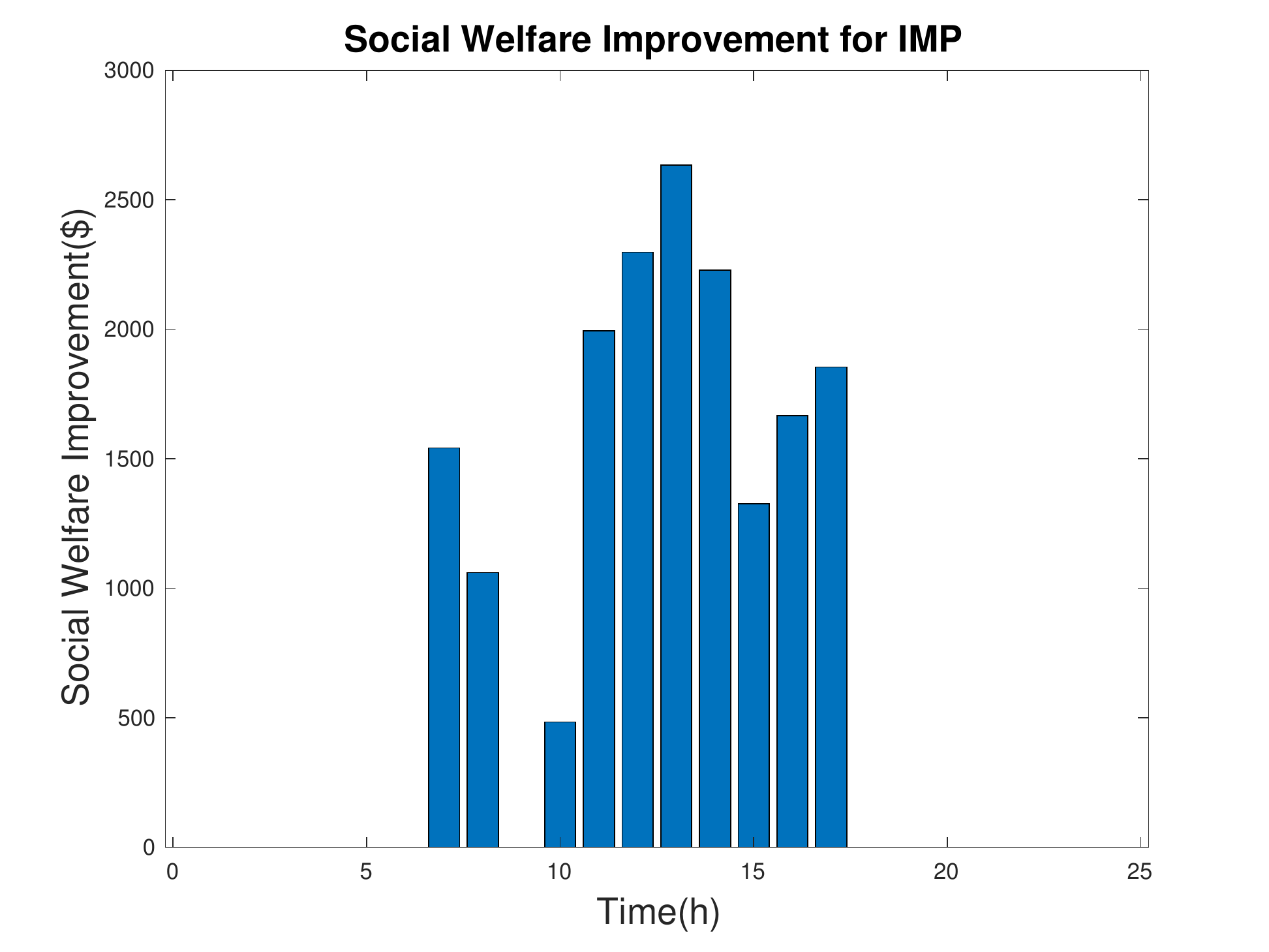}
                 \subcaption{}
                 \label{fig:two way imp sw}
                 \end{subfigure}\hfill
\caption{In Fig. \ref{fig:two way iny sw} to Fig. \ref{fig:two way imp sw}, we present the social welfare at each OD pair.}
\end{figure*}

\begin{figure*}[t!]
\centering
                 \begin{subfigure}{.65\columnwidth} \includegraphics[width=\columnwidth]{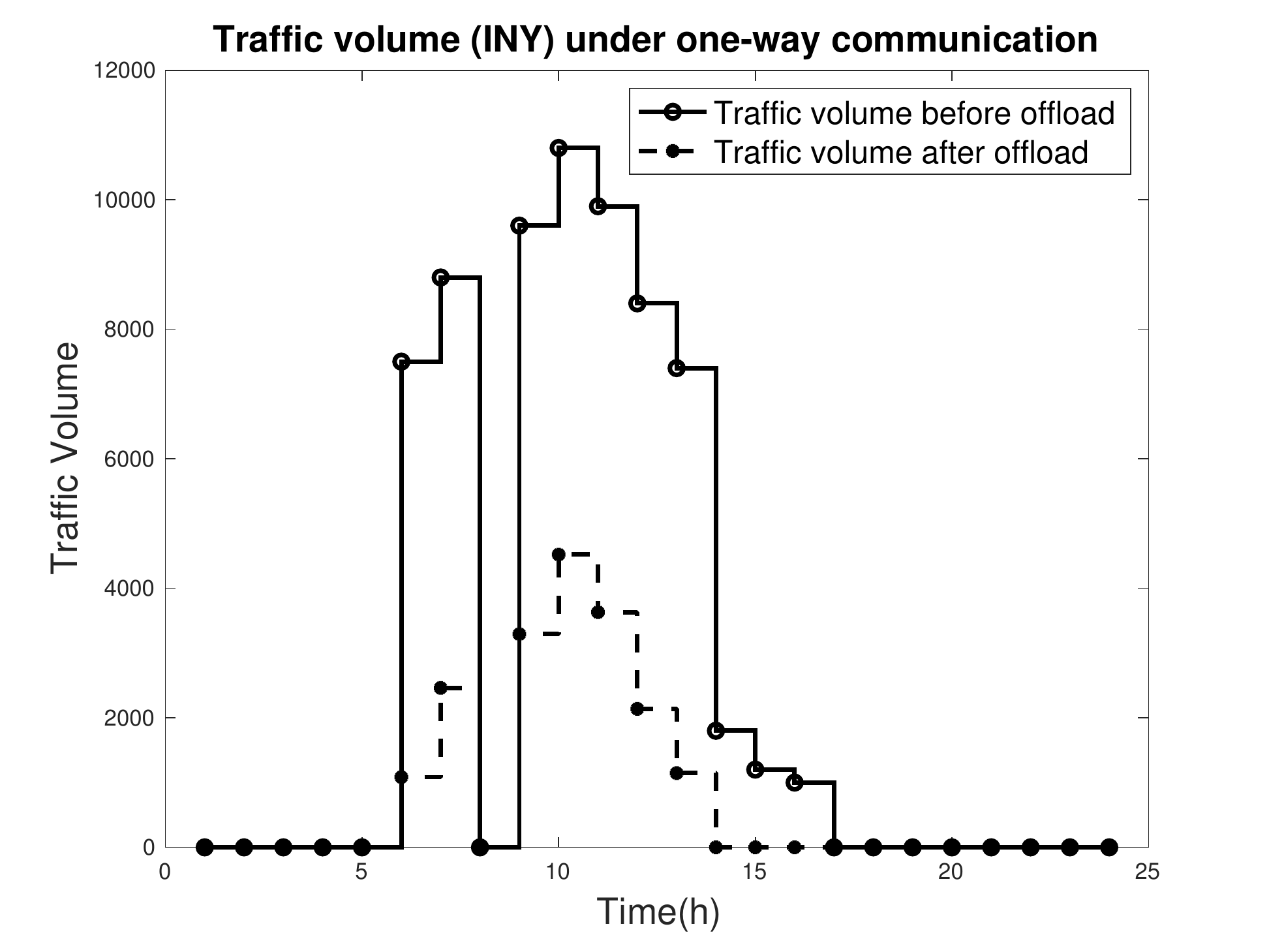}
                 \subcaption {}
                 \label{fig:one way iny}
                 \end{subfigure}\hfill
                 \begin{subfigure}{.65\columnwidth} \includegraphics[width=\columnwidth]{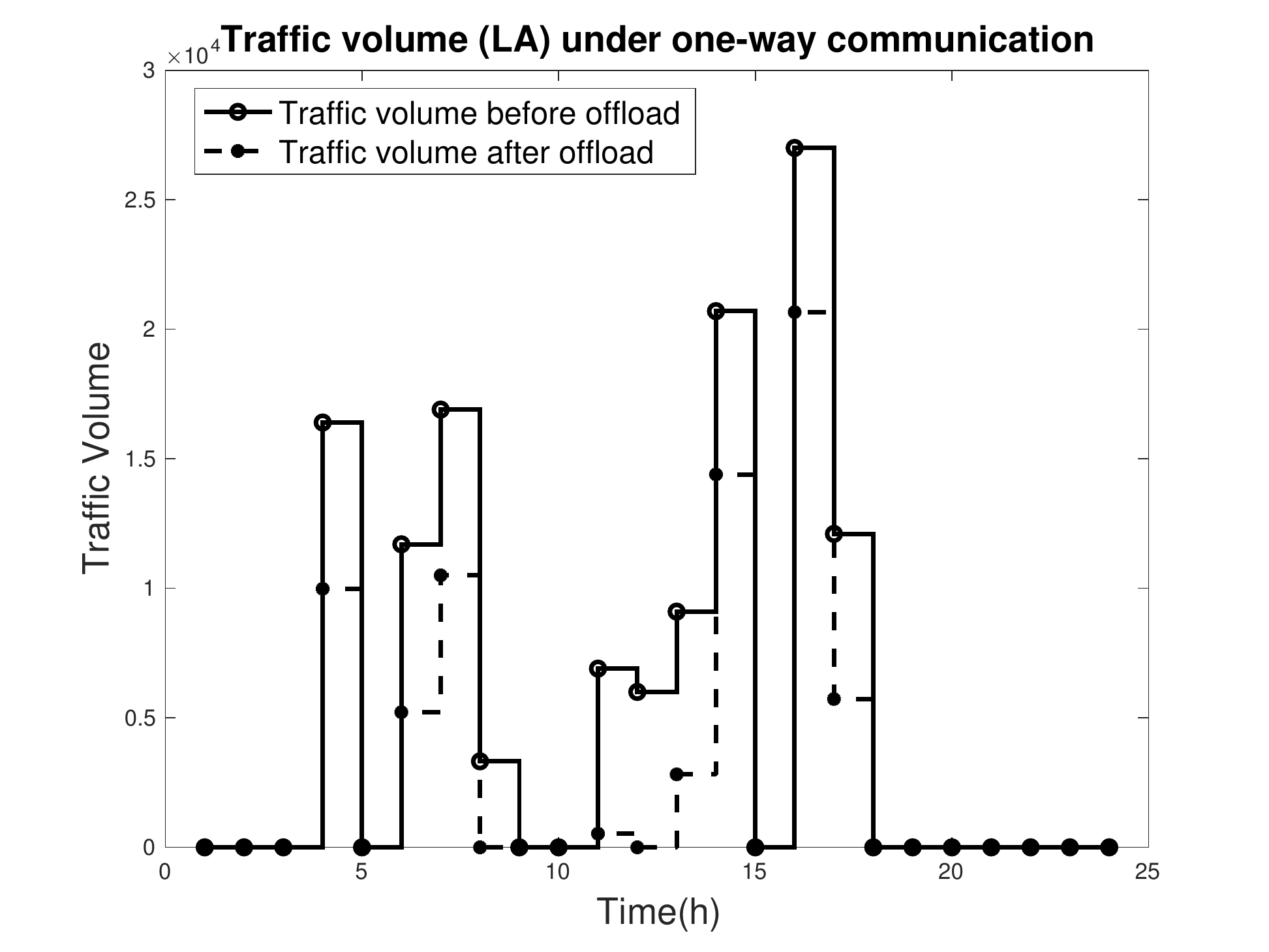}
                 \subcaption {}
                 \label{fig:one way la}
                 \end{subfigure}\hfill
                 \begin{subfigure}{.65\columnwidth} \includegraphics[width=\columnwidth]{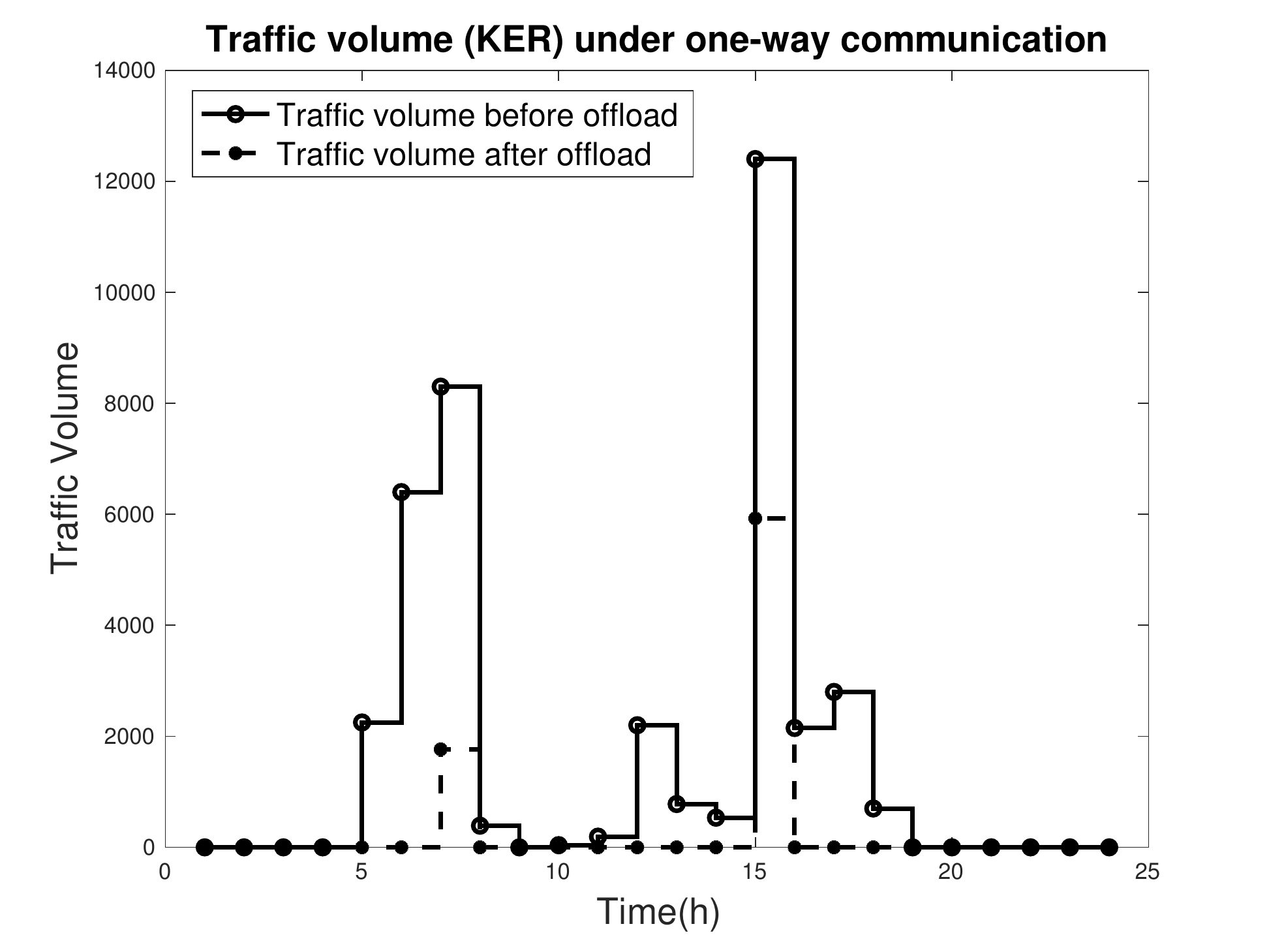}
                 \subcaption{}
                 \label{fig:one way ker}
                 \end{subfigure}\hfill
                 \begin{subfigure}{.65\columnwidth} \includegraphics[width=\columnwidth]{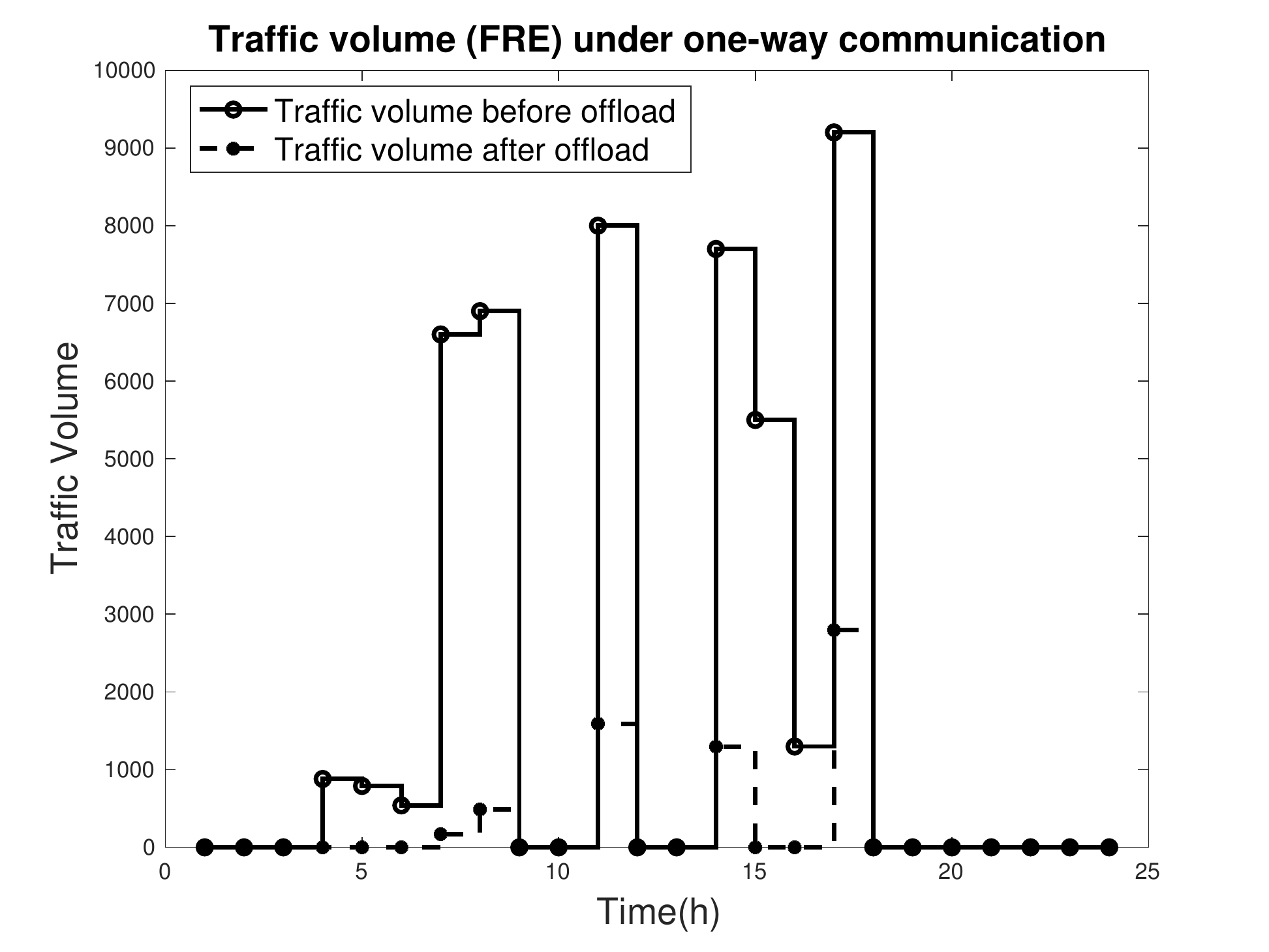}
                 \subcaption{}
                 \label{fig:one way fre}
                 \end{subfigure}\hfill
                 \begin{subfigure}{.65\columnwidth} \includegraphics[width=\columnwidth]{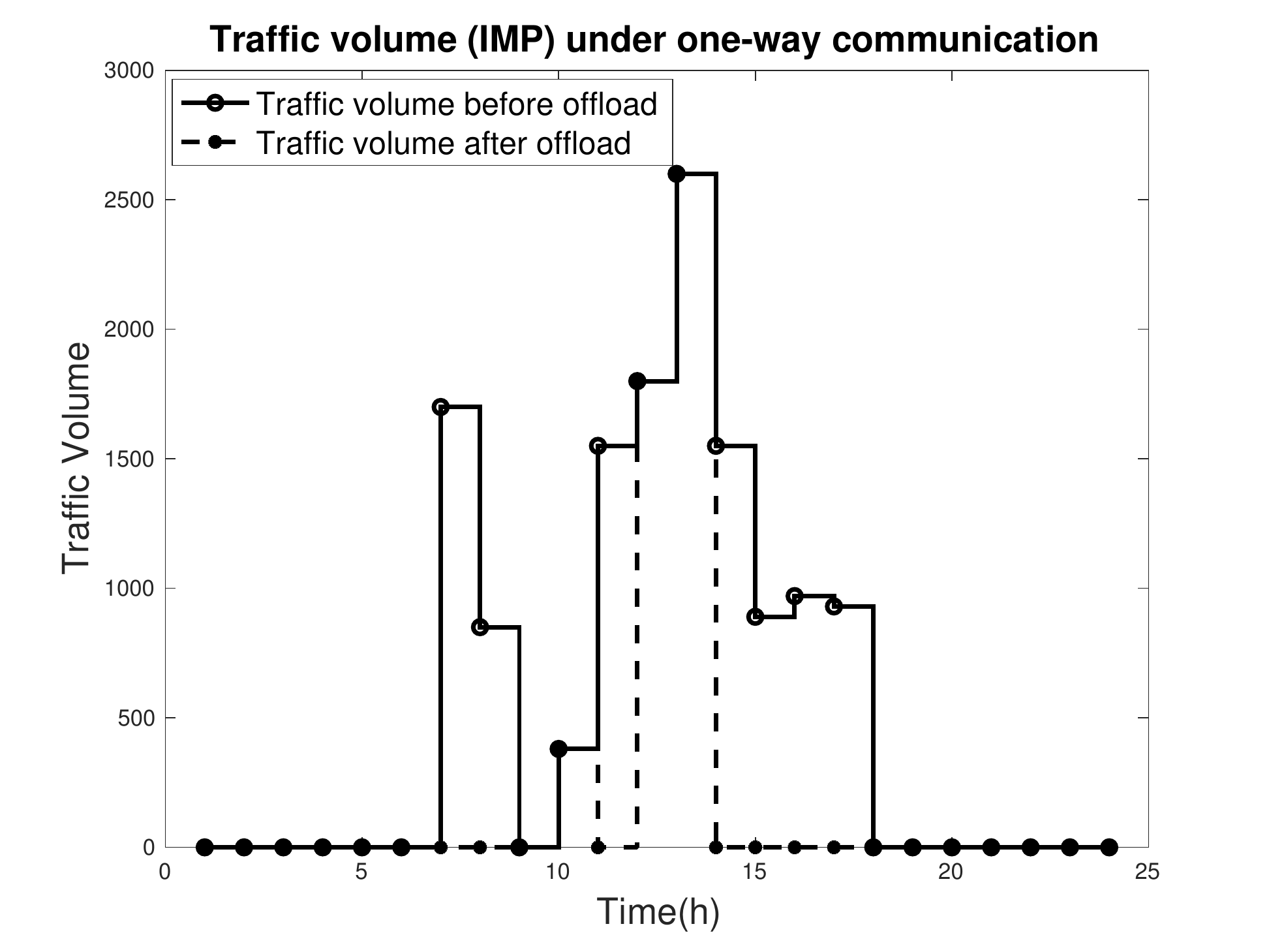}
                 \subcaption{}
                 \label{fig:one way imp}
                 \end{subfigure}\hfill
\caption{In Fig. \ref{fig:one way iny} to Fig. \ref{fig:one way imp}, we present the traffic volume before and after traffic offload. The solid curve is the traffic volume before traffic offload, whereas the dashed curve represents the traffic volume after traffic offload.}
\end{figure*}


\subsection{Case Study Setup}

We consider a government aiming at initiating traffic offload for $S=5$ OD pairs for the next day. Suppose the time horizon $T=24$ and each time slot $t$ is set as $1$ hour. The desired amount of traffic offload at each OD pair is obtained from \cite{PeMS}. We use the peak volume traffic data in $2017$. The $5$ roads that we used in the data set are county `INY' with direction $S$, county `LA' with direction $N$, county `KER' with direction $W$, county `FRE' with direction $S$, and county `IMP' with direction $S$. If a road appears multiple times in the data source, we take the average over the peak volume as the data used in the case study. To show the performance of traffic offload, we use the ahead peak hour traffic volume in \cite{PeMS} as the traffic volume without traffic offload. Since the ahead hourly traffic volume data is not available, we treat the ahead traffic data at different post mile as the traffic volume data at different time.

The size of the passenger set is $N=50000$. We assume the inconvenience cost function $C_{i,s}(q_{i,s})$ of each passenger $i$ is a linear combination of four factors denoted as comfort, reliability, delay on time of arrival, and cost \cite{bhat2006impact,jain2014identifying}. Different passenger assigns different weights on these factors. The weights for each passenger are generated using a multivariate normal distribution, with mean $[0.16,0.27,0.36,0.21]$ and variance $0.3I$, where $I$ is identity matrix \cite{jain2014identifying}.

\subsection{Two-way Communication}
In this section, we demonstrate the proposed approach for the two-way communication scenario. We first generate the passengers' bids. As shown in Theorem \ref{thm:two-way property(efficient)}, the passengers bid truthfully to the government, and hence the government knows the inconvenience cost function of each passenger. The amount of traffic offload that contributed by each passenger is generated using a normal distribution with mean $3.5$ and variance $0.3$. We remark that the contributions model the best effort of all passengers, i.e., the capabilities of all passengers.

We compute the incentives and selection profile following Algorithm \ref{algo:solution}. First, we show the traffic volume on each OD pair before and after traffic offload in Fig. \ref{fig:two way iny} to Fig. \ref{fig:two way imp}. The solid curve is the traffic volume before traffic offload, whereas the dashed curve represents the traffic volume after traffic offload. As observed in Fig. \ref{fig:two way iny} to Fig. \ref{fig:two way imp}, the traffic volume decreases by incentivizing the passengers to switch from private to public transit services. Moreover, the gap between the solid curve and dashed curve gives the amount of traffic offload due to passengers switching from private to public transit services. We next present the social welfare for each OD pair in Fig. \ref{fig:two way iny} to Fig. \ref{fig:two way imp}. We observe that by initiating the traffic offload program, the government earns non-negative social welfare for all time $t$ on each OD pair. We finally give the traffic condition improvement and average payment received by each passenger for each OD pair at $12:00$ PM in Table \ref{table:improvement and payment}. 

\begin{table}[h!]
\centering
\begin{tabular}{|c |c |c |c | c| c|} 
    \hline
    County & INY & LA & FRE & KER & IMP \\
    \hline
    \hline
    Improvement & $7.55\%$ & $90.15\%$ & $0$ & $25.73\%$ & $60.67\%$ \\
    \hline
    Avg. payment & $\$4.164$ & $\$3.8572$ & $\$0$ & $\$3.8683$ & $\$3.7047$\\
    \hline
    \end{tabular}
\caption{Traffic volume improvement and average payment issued to the passengers for each county at $12:00$ PM under two-way communication setting.}
\label{table:improvement and payment}
\end{table}

We present the min-entropy leakage in Fig. \ref{fig:privacy_two_way} to validate that the proposed incentive design in Algorithm \ref{algo:solution} is privacy preserving. We compute the min-entropy for OD pair `INY' at $12:00$ PM when differential privacy parameter $\epsilon$ varies from $0.01$ to $1$. We observe that the min-entropy is monotone increasing with respect to parameter $\epsilon$, which agrees with our privacy preserving property. That is, when the mechanism is designed with stronger privacy guarantee, there exists less min-entropy leakage for each individual passenger.


\begin{figure}[ht]
  \centering
  \begin{subfigure}[b]{0.5\linewidth}
    \centering\includegraphics[width=\linewidth]{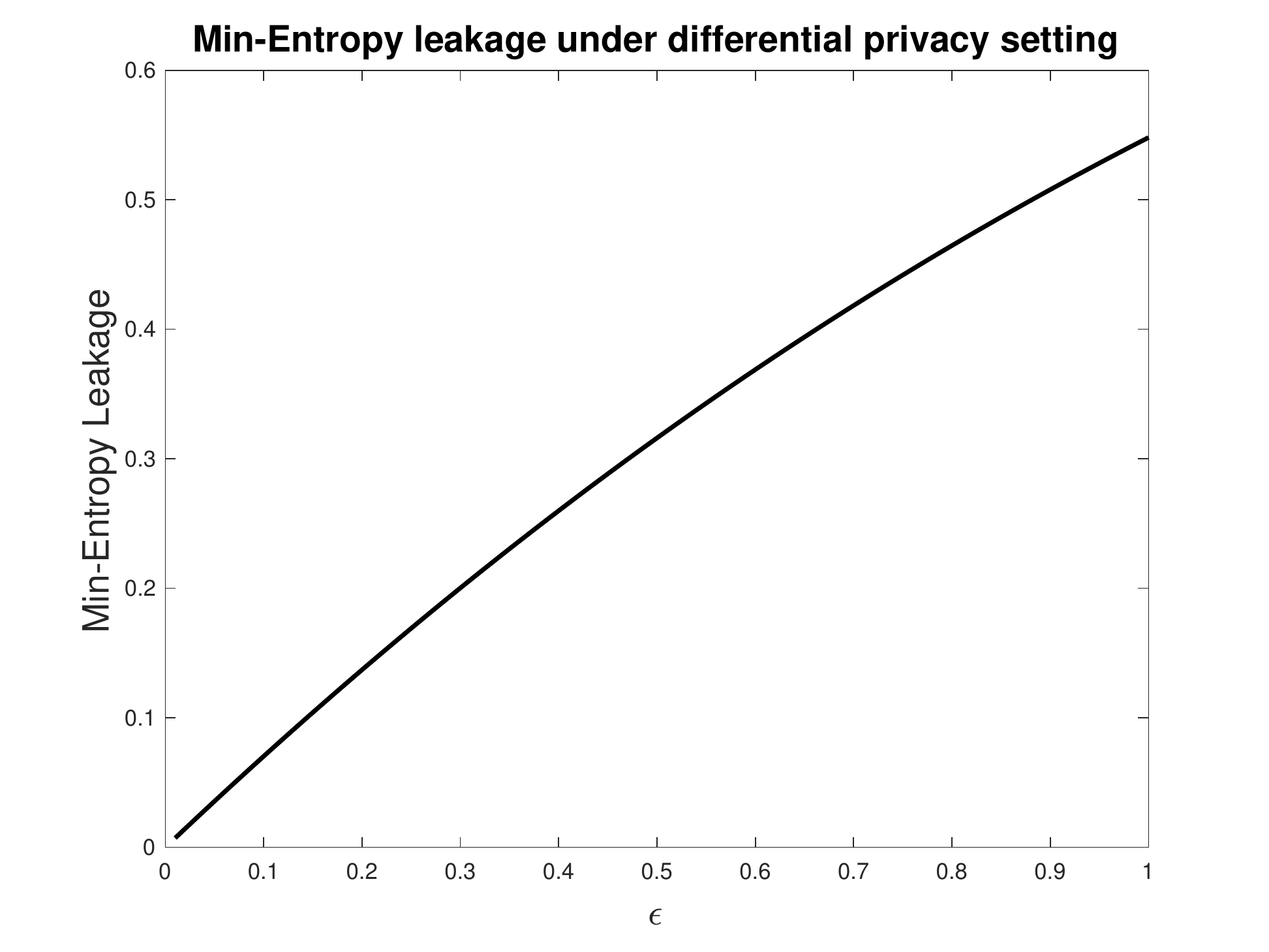}
    \caption{}
    \label{fig:privacy_two_way}
  \end{subfigure}%
  \begin{subfigure}[b]{0.5\linewidth}
    \centering\includegraphics[width=\linewidth]{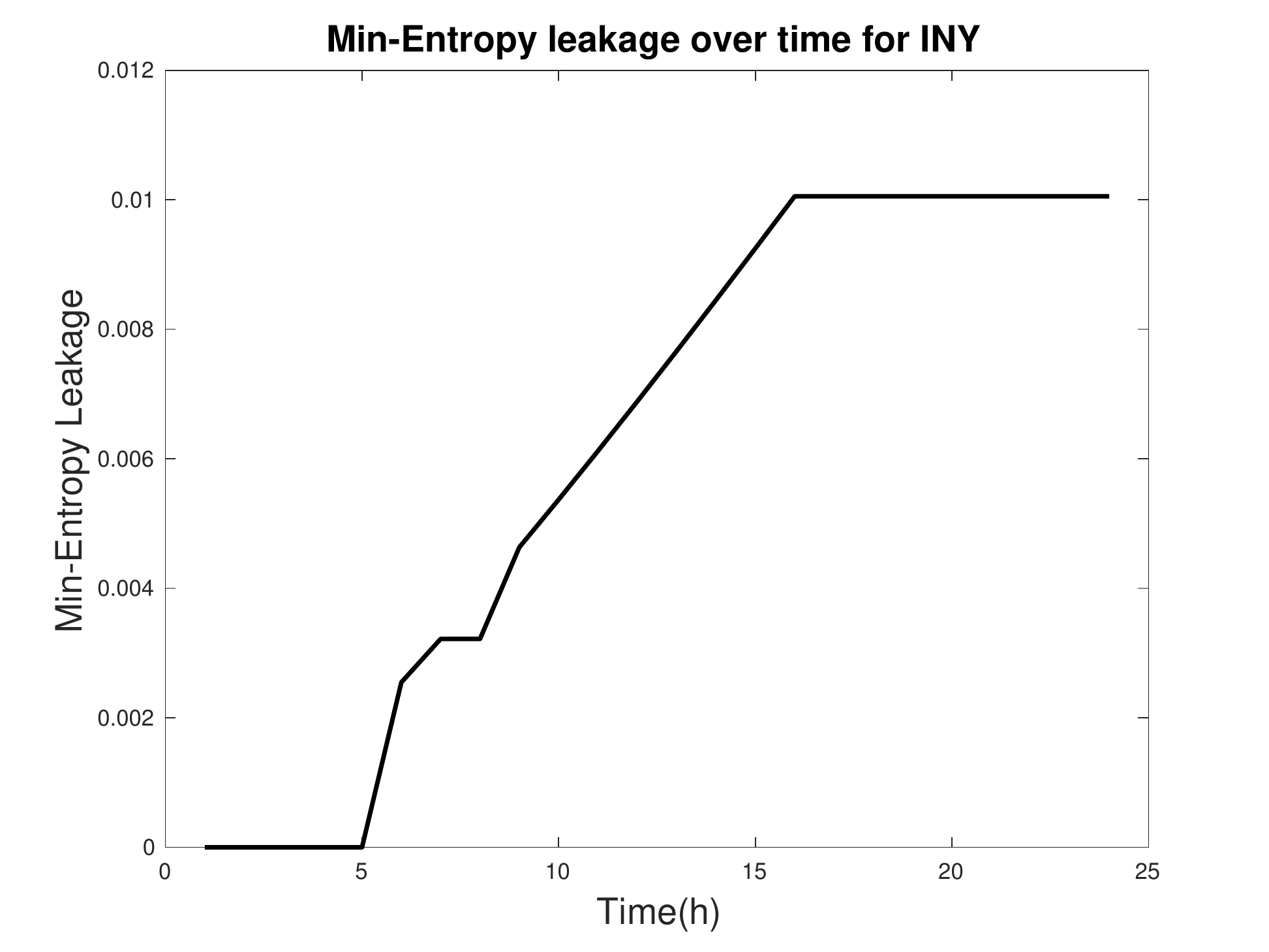}
    \caption{}
    \label{fig:privacy_one_way}
  \end{subfigure}
  \caption{\ref{fig:privacy_two_way} shows min-entropy leakage for OD pair INY at $12:00$ PM when parameter $\epsilon$ varies from $0.01$ to $1$. Fig. \ref{fig:privacy_one_way} shows Min-entropy leakage $L$ for OD pair INY over time.}
\end{figure}

\subsection{One-way Communication}
In this subsection, we demonstrate the proposed approach for the one-way communication scenario. The government initializes a first guess of incentive price $0.02\$$. Given the incentive price $p_{s,t}$, the response from each passenger is computed by Lemma \ref{lemma:passenger response}. The capability of each passenger is adopted from the setting under two-way communication. 

We present the traffic volume on each OD pair before and after traffic offload in Fig. \ref{fig:one way iny} to Fig. \ref{fig:one way imp}. We have the following observations. First, the traffic volume decreases due to passengers switching from private to public transit services. Similar to Fig. \ref{fig:two way iny} to Fig. \ref{fig:two way imp}, the gap between the curves represents the amount of traffic offload. Finally, the traffic volume after traffic offload is lower than that under two-way communication setting for some time $t$, i.e., the amount of traffic offload contributed by the passengers is higher than that under two-way communication setting. The reasons are two fold. First, the government does not know the inconvenience cost function of each passenger under the one-way communication setting and has no ability to select the participating passengers. Therefore, the participating passengers could contribute more than $Q_{s,t}$ for all $s$ and $t$ under one-way communication setting. However, the government selects the winners under two-way communication setting and only $Q_{s,t}$ amount of traffic offload is realized for all $s$ and $t$. Second, the passengers' inconvenience costs are modeled as linear function. Hence any passenger $i$ such that $p_{s,t}\geq C'_{i,s}(q_{i,s})$ would participate in traffic offload by shedding the maximum amount of traffic offload, i.e., contribute its maximum effort. 


We finally present the min-entropy leakage for OD pair (INY) under one-way communication setting in Fig. \ref{fig:privacy_one_way}. In this case study, parameter $\epsilon$ is set as $0.015$. We show how privacy is preserved over time. We observe that the privacy leakage increases over time. The reason is that the malicious party perceives more information over time. Hence, more information can be inferred by the adversary as time increases.

\section{Conclusions}\label{sec:conclusion}
In this paper, we investigate the problem of incentivizing passengers to switch from private to public transit service to mitigate traffic congestion and achieve sustainability. We consider two settings denoted as two-way communication and one-way communication. We model the interaction under former setting using a reverse auction model and propose a polynomial time algorithm to solve for an approximate solution that achieves approximate social optimal, truthfulness, individual rationality, and differential privacy. In the latter setting, we present a convex program to solve for the incentive price. The proposed approach achieves Hannan consistency and differential privacy. The proposed approaches are evaluated using a numerical case study with real-world trace data.

\IEEEpeerreviewmaketitle
\bibliographystyle{IEEEtran}

\vskip -2\baselineskip plus -1fil
\begin{IEEEbiography}[{\includegraphics[width=1in,height=1.25in,clip,keepaspectratio]{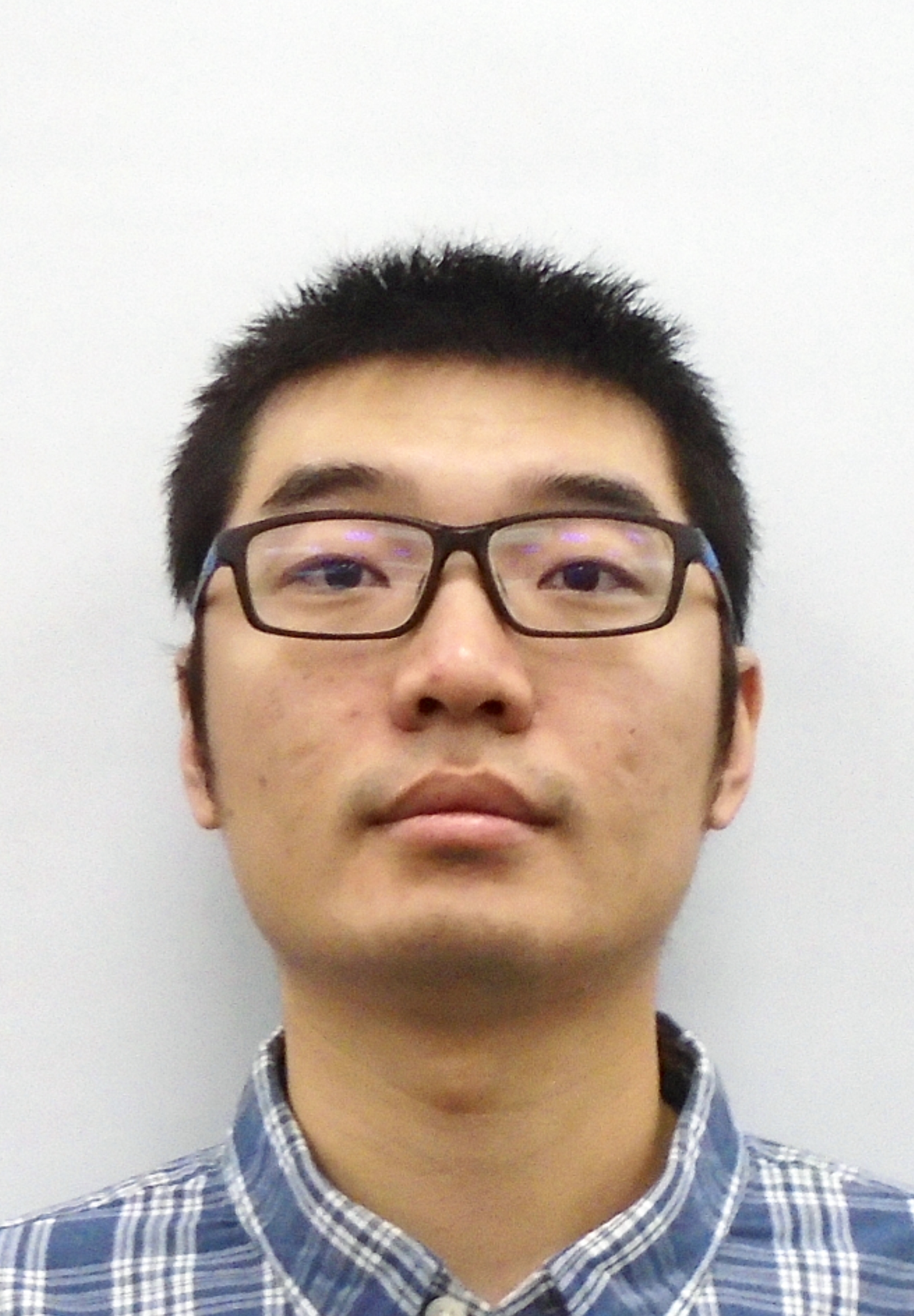}}]
{Luyao Niu}(SM'15)
received the B.Eng. degree from the School of Electro-Mechanical Engineering, Xidian University, Xi’an, China, in 2013 and the M.Sc. degree from the Department of Electrical and Computer Engineering, Worcester Polytechnic Institute (WPI) in 2015. He has been working towards his Ph.D. degree in the Department of Electrical and Computer Engineering at Worcester Polytechnic Institute since 2016. His current research interests include optimization, game theory, and control and security of cyber physical systems. 
\end{IEEEbiography}
\vskip -2.8\baselineskip plus -1fil
\begin{IEEEbiography}[{\includegraphics[width=1in,height=1.25in,clip,keepaspectratio]{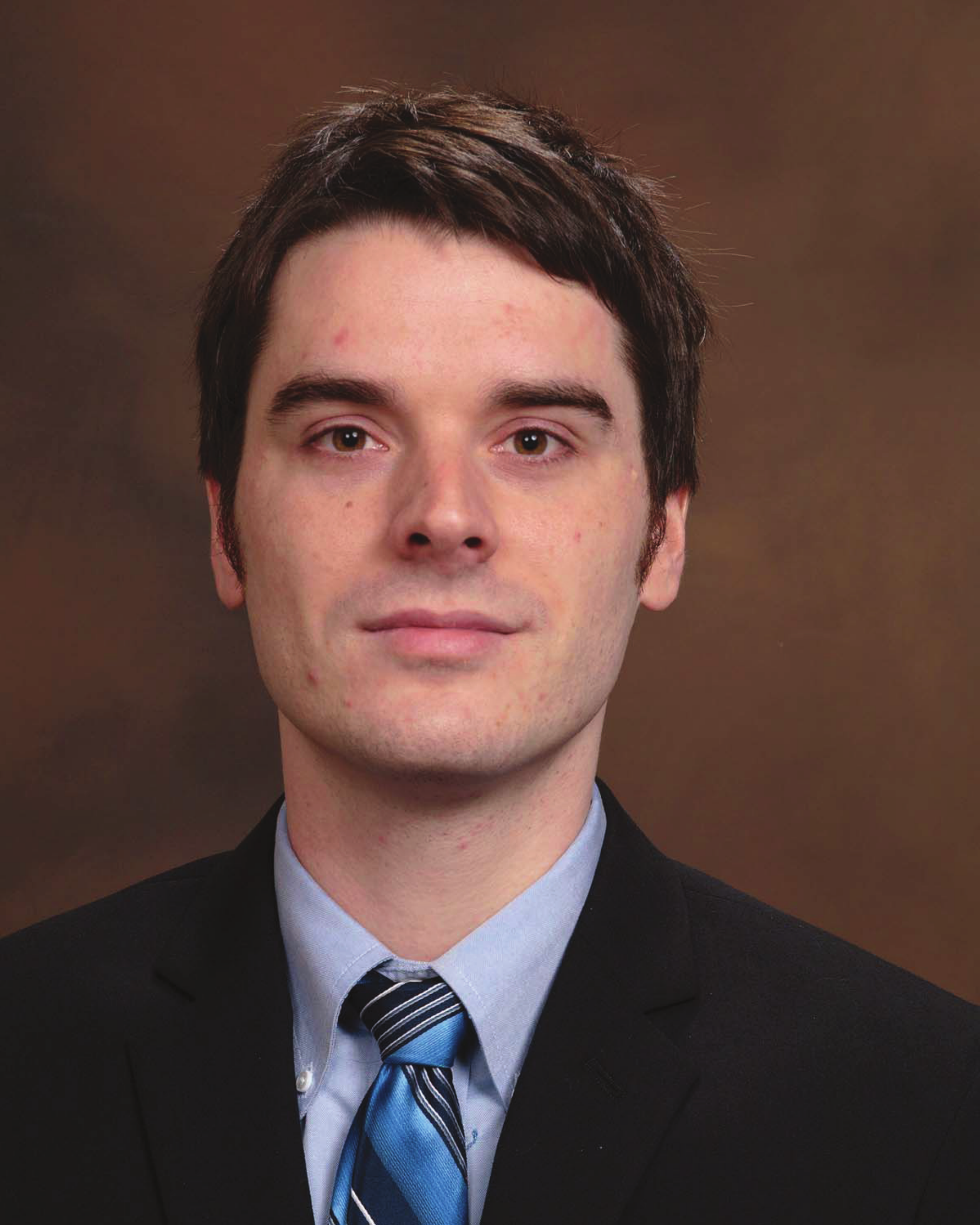}}]{Andrew Clark}(M'15)
is an Assistant Professor in the Department of Electrical and Computer Engineering at Worcester Polytechnic Institute. He received the B.S. degree in Electrical Engineering and the M.S. degree in Mathematics from the University of Michigan - Ann Arbor in 2007 and 2008, respectively. He received the Ph.D. degree from the Network Security Lab (NSL), Department of Electrical Engineering, at the University of Washington - Seattle in 2014. He is author or co-author of the IEEE/IFIP William C. Carter award-winning paper (2010), the WiOpt Best Paper (2012), and the WiOpt Student Best Paper (2014), and was a finalist for the IEEE CDC 2012 Best Student-Paper Award. He received the University of Washington Center for Information Assurance and Cybersecurity (CIAC) Distinguished Research Award (2012) and Distinguished Dissertation Award (2014). His research interests include control and security of complex networks, submodular optimization, and control-theoretic modeling of network security threats. 
\end{IEEEbiography}

\end{document}